\documentclass[11pt]{article}

\usepackage{amsmath,amsthm,nicefrac}
\usepackage{amsfonts, amstext}

\usepackage[compact]{titlesec}
\usepackage{fullpage}
\usepackage[font={small,it}]{caption}

\usepackage{setspace}
\usepackage{comment}

\usepackage{enumitem}
\usepackage[breaklinks]{hyperref}
\hypersetup{colorlinks=true,%
            citebordercolor={.6 .6 .6},linkbordercolor={.6 .6 .6},%
citecolor=blue,urlcolor=black,linkcolor=blue,pagecolor=black}

\usepackage[nameinlink]{cleveref}
\Crefname{algocf}{Algorithm}{Algorithms}
\crefname{algocfline}{line}{lines}
\Crefname{invariant}{Invariant}{Invariants}
\Crefname{claim}{Claim}{Claims}
\Crefname{subclaim}{Subclaim}{Subclaims}

\usepackage{epsfig}
\usepackage{amsthm,amssymb}
\usepackage{amsthm,amssymb}

\usepackage{mathrsfs}
\usepackage{xspace}
\usepackage{soul}
\usepackage{latexsym}
\usepackage{bbm}

\usepackage{framed}

\usepackage[dvipsnames]{xcolor}
\definecolor{DarkGray}{rgb}{0.66, 0.66, 0.66}
\definecolor{DarkPowderBlue}{rgb}{0.0, 0.2, 0.6}
\definecolor{fluorescentyellow}{rgb}{0.8, 1.0, 0.0}

\usepackage[ruled,vlined,linesnumbered,algonl]{algorithm2e}
\SetEndCharOfAlgoLine{}
\SetKwComment{Comment}{\footnotesize$\triangleright$\ }{}

\SetCommentSty{mycommfont}

\usepackage{thmtools,thm-restate}

\allowdisplaybreaks

\newcounter{note}[section]
\renewcommand{\thenote}{\thesection.\arabic{note}}

\newcommand{\rznote}[1]{\refstepcounter{note}$\ll${\bf Rudy~\thenote:}
	{\sf \color{red} #1}$\gg$\marginpar{\tiny\bf RZ~\thenote}}

\sethlcolor{fluorescentyellow}

\newcommand{\initOneLiners}{%
    \setlength{\itemsep}{0pt}
    \setlength{\parsep }{0pt}
    \setlength{\topsep }{0pt}
}

\newtheorem{theorem}{Theorem}[section]
\newtheorem{prop}[theorem]{Proposition}
\newtheorem{lem}[theorem]{Lemma}
\newtheorem{cor}[theorem]{Corollary}

\theoremstyle{definition}
\newtheorem{definition}[theorem]{Definition}
\newtheorem{example}[theorem]{Example}

\theoremstyle{remark}

\newcommand{\EE}{\mathbb{E}}
\newcommand{\RR}{\mathbb{R}}
\newcommand{\NN}{\mathbb{N}}

\usepackage{bbm}

\newcommand{\alg}{\textsc{Alg}}
\newcommand{\opt}{\textsc{Opt}}
\newcommand{\ind}{\mathbbm{1}}
\newcommand{\prob}{\mathbb{P}}
\newcommand{\gain}{\textsc{Gain}\xspace}

\newcommand{\ee}{\textsc{EE}\xspace}
\newcommand{\optee}{\textsc{Opt}_{\ee}\xspace}
\newcommand{\opteec}{\textsc{Opt}_{\ee}^C\xspace}
\newcommand{\pol}{\mathcal{P}}

\newcommand{\bayes}{\textsc{BayesProbe}\xspace}
\newcommand{\probe}{\textsc{StochProbe}\xspace}
\newcommand{\staralg}{\textsc{EEStar}\xspace}

\newcommand{\genalg}{\textsc{AdaptGeneral}\xspace}
\newcommand{\decompalg}{\textsc{StarDecomp}\xspace}
\newcommand{\knapsack}{\textsc{NonAdapt}\xspace}
\newcommand{\lp}{\textsc{LP}\xspace}
\newcommand{\fracgreedy}{\textsc{FracGreedy}\xspace}
\newcommand{\adaptiveknapsack}{\textsc{AdaptSubmodKnapsack}\xspace}

\title{Bayesian Probing on Graphs}
\author{Anupam Gupta \\ New York University \and  Benjamin Moseley \\ Carnegie Mellon University \and  Rudy Zhou \\ Microsoft}

\begin{document}
	
	\maketitle

    \begin{abstract} 


We introduce a stochastic probing problem with correlated items. In
our model, which we call \emph{Bayesian Probing}, the correlations are
modeled by an underlying graph $G$. Each vertex is independently
active with a known probability. Each item corresponds to an edge in
the graph. Probing an edge has some cost, gives some reward if both
endpoints are active, and also reveals the state of its
endpoints. Hence a probe induces a Bayesian update on the remaining edges. The goal is to adaptively probe items/edges subject to a knapsack constraint to maximize the expected total reward obtained from the probed edges.

Bayesian Probing generalizes stochastic knapsack and stochastic
probing by allowing correlations between items. Moreover, it gives a
tractable model for the \emph{Bayesian Active Search} problem, a
popular problem considered in the machine learning community. In
Bayesian Active Search, the goal is to find items in a particular
class by adaptively probing at most, say $k$, items. Given a prior
distribution over items, we want to compute a Bayesian policy to
maximize the number of such items found. For this general problem with
arbitrary priors, there are strong lower bounds on efficiently
computing good policies.

In this paper, we design efficient approximation algorithms for
Bayesian
Probing: 
\begin{itemize}
\item We give asymptotically tight bounds on the \emph{adaptivity
    gap}: we design a non-adaptive $O(d_{\max} \cdot S)$-approximation
  algorithm, where $d_{\max}$ is the maximum vertex degree of $G$ and
  $S$ is the ratio of the maximum and minimum edge costs. We also show
  a matching $\Omega(d_{\max} \cdot S)$ lower bound for non-adaptive policies.
\item We design an adaptive policy that achieves a
  $O(dens(G) \cdot S^2)$-approximation, where $dens(G)$ is the density
  of the densest subgraph of $G$.
\end{itemize}
These results give the first efficient approximation algorithms for
Bayesian Active Search, for a class of practically-relevant prior
distributions. 


    \end{abstract}
    
	\thispagestyle{empty}
	
	\newpage
	\setcounter{page}{1}
	
	\section{Introduction}\label{sec-intro}


Active search is a fundamental problem in applications like fraud detection and drug discovery \cite{DBLP:journals/jcamd/GarnettGVB15}, where the objective is to identify elements of a target class within a large search space under a limited budget. In this paradigm, items within the search space -- such as financial transactions or chemical compounds -- are often interrelated; transactions may share metadata, or chemical compounds may have common substructures. Consequently, evaluating a single item can provide information about others, a process often modeled via a Bayesian update on the likelihood of other items belonging to the target class. Bayesian optimization provides a popular framework for sequentially selecting items to evaluate, leveraging the outcomes of past evaluations to inform future choices. See the book by Garnett \cite{garnett_bayesoptbook_2023} and the work of Jiang et al. \cite{JiangMAMG18,JiangMCSMG17} for an overview.

More formally, in the \emph{Active Search problem} \cite{GarnettKXSM12}, as input we are given a finite set of elements $X$ and budget $k \in \mathbb{N}$. We assume there is an \emph{unknown} subset $T \subset X$ of \emph{targets} but with known distribution $T \sim \mathcal{D}$. Note that in general, there may be correlations among different elements being in the target set. For each $x \in X$, we do not know if $x \in T$ initially, but we can request a binary test for $x$, which returns either ``$x \in T$'' or ``$x \notin T$''. The goal is to adaptively choose a set of at most $k$ tests, say $P \subset X$ to maximize the expected total number of targets found: $\EE[\sum_{x \in P} \ind(x \in T)]$.

Prior work has focused on computing the optimal policy for a given objective function -- typically maximizing the number of items found in the class as stated above -- but unfortunately the optimal policy requires exponential space and time in general. Further, for general correlation models, no polynomial time algorithm can have better than a polylogartihmic approximation ratio for this objective \cite{JiangMCSMG17}.  While this lower bound is known, there has been no progress on upper bounds on the approximation ratio. Instead, researchers have focused on heuristic methods with no performance guarantees \cite{DBLP:conf/kdd/WangGS13,JiangMCSMG17}.

The main motivation of this paper is to give a \emph{concrete correlation model} for Bayesian active search that enables us to efficiently compute \emph{provably near-optimal} policies. This work gives the foundation for algorithms with provable guarantees for Bayesian active search problem by leveraging techniques from stochastic combinatorial optimization.

\paragraph{Bayesian Probing Problem} Motivated by Bayesian active search, we introduce the \emph{Bayesian probing problem} (\bayes): We are given as input an undirected graph $G = (V,E)$\footnote{We allow parallel edges, but not self loops; given a graph with self loops, for each such edge we create a dummy vertex with probability $1$ and make one endpoint of the self loop this dummy vertex.}, where each vertex $v \in V$ is independently \emph{active} with probability $p_v \in [0,1]$. The probabilities are \emph{known} to the algorithm apriori. We assume that the probabilities $p_v$'s are known, but the active vertices are not. We let $X_v$ be the indicator random variable for vertex $v$ being active. Each edge $uv \in E$ has a random \emph{reward} $R_{uv} = w_{uv} \cdot X_u X_v$, where $w_{uv} \geq 0$ and $s_{uv} > 0$ are the \emph{weight} and \emph{size} of this edge, respectively.

We can access the graph only through \emph{edge probes}. We are given a budget $B > 0$. Probing edge $e$ uses $s_e$ of the budget and gives us reward $R_e$. Further, we learn the outcome of both endpoint vertices, $X_u$ and $X_v$. We allow probes to be chosen adaptively, so the subsequent probes can depend on the information learned from prior ones. The goal is to adaptively probe edges of total size at most $B$ to maximize the expected reward obtained from the probed edges -- precisely:
\[\EE \big[ \sum_{e \in P} R_e \big] = \EE \big[ \sum_{uv \in P} r_{uv} \cdot X_u X_v \big],\]
where $P$ is the adaptively-chosen probe set.

We note that \bayes captures a version of Bayesian active search with a particular correlation model and Bayesian update. The edges of $G$ correspond to the items to be tested, and probing an edge corresponds to testing whether that item belongs to the target set. In our model, each item's status is determined by two ``features'' -- its endpoints (e.g. each chemical compound may have two ingredients in a drug discovery setting). An item is in the target set if and only if both its corresponding vertices are active. Since edges can share endpoints, probing one edge reveals information about its vertices, thereby updating the expected reward of all other edges incident to those same vertices.

\begin{example}\label{ex:application}
    For a concrete example, consider a network of financial entities (vertices) and the transactions between them (edges). Each entity has a certain probability of being ``corrupt" (active) or ``legitimate" (inactive). A transaction is flagged as fraudulent (yielding a reward) only if both participating entities are corrupt. An investigator can spend resources to audit a specific transaction (probe an edge). This audit reveals whether the transaction is fraudulent and, crucially, also confirms the status (corrupt or legitimate) of the two entities involved. The investigator's goal is to use a limited budget to adaptively audit transactions and uncover the maximum possible amount of fraud.  
\end{example}

For the remainder of this paper, we focus on designing algorithms for \bayes. One can compute the optimal adaptive policy -- which can be represented by a decision tree -- via a  dynamic program, but unfortunately computing this policy and its representation may require exponential time/space. Thus, we focus on efficient \emph{approximation algorithms}. For any $\alpha \geq 1$, we say an algorithm, $\alg$, is an \emph{$\alpha$-approximation} for \bayes if for any instance $\mathcal{I}$, we have $\frac{\EE[\opt(\mathcal{I})]}{\EE[\alg(\mathcal{I})]} \leq \alpha$, where $\opt$ is the reward of the optimal \emph{adaptive} policy.

We observe that the deterministic version of the problem, where $p_v = 1$ for all $v \in V$ is exactly the classic knapsack problem, which admits a Fully Polynomial Time Approximation Scheme (FPTAS) \cite{DBLP:journals/jacm/IbarraK75, DBLP:journals/mor/Lawler79}, so for any $\epsilon \in (0,1)$, we can obtain a $\frac{1}{1-\epsilon}$-approximate solution in polynomial time in the input size and $\frac{1}{\epsilon}$. However, beyond this special case, there is no non-trivial approximation result for \bayes.

\subsection{Applications and related work}

Beyond the Bayesian active search problem described above, \bayes is related to many other directions in stochastic combinatorial optimization.

\paragraph{Stochastic probing}

Bayesian probing is closely connected with stochastic probing. In stochastic probing, we are given a collection of elements $E$, such that each element is active \emph{independently} with known probability. We can adaptively probe elements to learn if they are active or not. Further, we are given a set function $f: 2^E \xrightarrow{} \mathbb{R}$. The goal of stochastic probing is to adaptively probe elements, say $P \subset E$, and choose a subset $Q \subset P$ of the probe set to maximize $\EE[f(Q \cap A)]$, where $A \subset E$ is the random subset of active elements. Stochastic probing has been studied for different $f$ (linear, submodular, XOS) and constraints on the $P$ and chosen subset $Q$ \cite{DBLP:conf/ipco/GuptaN13, DBLP:conf/soda/GuptaNS16, DBLP:conf/soda/GuptaNS17}.

The main distinction is that in \bayes, we remove the assumption that the elements are active independently. There has been some prior work on probing problems with correlated elements. However, the results rely on strong assumptions on the correlation model and function $f$. Most relevant to our work is the work of \cite{DBLP:journals/jair/GolovinK11}, where they introduce a property called \emph{adaptive submodularity} (\Cref{definition:adaptivesubmodularity}), which enables simple greedy-type $O(1)$-approximations for a variety of stochastic probing problems. We will show that \bayes does \emph{not} satisfy adaptive submodularity (\Cref{appendix: adapt}), so our work is not subsumed by theirs.

\paragraph{Stochastic knapsack}

Stochastic knapsack is a well-studied generalization of the classic knapsack problem. In stochastic knapsack, we are given a deterministic knapsack budget, say $B$, but the items are stochastic. There are $O(1)$-approximations known for multiple variants of the problem, the simplest being that each item has a random size with known distribution but deterministic reward \cite{DBLP:journals/mor/DeanGV08}. This has been generalized to random sizes with random rewards such that the size and reward distributions of an item may be arbitrarily correlated \cite{DBLP:conf/focs/GuptaKMR11}, and even further such that each item is a Markov chain \cite{DBLP:journals/mor/Ma18}. However, in all of these variations, it is assumed that the size/reward distributions are independent \emph{across items}.

In contrast, \bayes is another variation of stochastic knapsack where instead the sizes are deterministic, but the rewards are random and correlated with each other. To the best of our knowledge, we give the first positive results for stochastic knapsack with correlations across items for any correlation model.

\paragraph{Algorithmic site percolation}

Site percolation is a model of random graphs studied in probabilistic combinatorics. Given a graph $G$ and probability parameter $p \in [0, 1]$, let $G[V_p]$ be the random induced subgraph on $V_p \subset V(G)$, where each vertex $v \in V(G)$ is in $V_p$ independently with probability $p$.

Researchers have focused on proving threshold results for graph properties of the percolated graph, $G[V_p]$, such as for what range of $p$ does $G[V_p]$ contains a linear-sized component \cite{krivelevich2015phase, 715a160d-b084-3cf9-880e-c54074a10a63} or properties of particular families of graphs such as the grid graph \cite{REIDYS20093113}. Further, percolation theory has been applied to model and predict the spread of diseases \cite{davis2008abundance} and the extinction of species \cite{boswell1998habitat}.

Given a graph $G$ and probability parameter $p$, consider the Bayesian probing instance where $p_v = p$ for all $v \in V(G)$. The edges which give non-zero reward are exactly those that are in the percolated graph $G[V_p]$. Thus, \bayes captures an algorithmic question on the percolated graph: How much of the percolated graph can we discover given a budget of edge queries?

This question is of particular relevance to the above applications, where it is expensive to query the underlying graph, which involves expensive diagnoses or field work, respectively.

\paragraph{Stochastic optimization with correlations}

Another common theme throughout stochastic combinatorial optimization is the independence assumption: We typically assume that the jobs to be scheduled, items to be packed, or elements to be probed are independent of one another. There has been recent interest in relaxing this assumption by allowing correlations in problems such as Pandora's box \cite{chawla2020pandora}, mechanism design \cite {cai2021simple}, and prophet inequality \cite{livanos2024improved}. Our work on \bayes adds stochastic knapsack and probing to this list.

\subsection{Our results}

Our main question is how well can efficiently-computable policies approximate the optimal adaptive policy for \bayes?

First, we consider the simplest class of policies: \emph{non-adaptive policies}. A non-adaptive policy fixes a probe set $P \subset E$ up-front without any knowledge of the probe outcomes. In other words, a non-adaptive policy performs no Bayesian updates. Our first result gives an upper bound on the approximation ratio achieved by non-adaptive policies.

\begin{restatable}{theorem}{thmnonadapt}
	\label{thm-nonadapt}
	There exists a poly-time algorithm for \bayes that outputs a non-adaptive policy that $O(d_{max} \cdot S)$-approximates the optimal adaptive policy, where $d_{max}$ is the max degree of $G$ and $S$ is the ratio between the maximum and minimum edge sizes in $G$.
\end{restatable}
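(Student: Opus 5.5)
The plan is to compute the best non-adaptive probe set for the \emph{expected} rewards $\bar w_e := w_e p_u p_v$ (for $e = uv$), and to show that $\EE[\opt] \le O(d_{max} S)$ times the value of a fractional knapsack solution for these expected rewards. The algorithm itself is just knapsack: items are edges, with value $\bar w_e$ and size $s_e$. Run the FPTAS (or greedy plus best single item) to get a set $P$ with $\sum_{e\in P} s_e \le B$ and value at least a constant fraction of the fractional optimum
\[ \lp := \max\Big\{\sum_e \bar w_e y_e : \sum_e s_e y_e \le B,\ y\in[0,1]^E\Big\}. \]
Since the non-adaptive policy earns exactly $\sum_{e\in P}\bar w_e$ by linearity, it suffices to prove $\EE[\opt] \le O(d_{max} S)\cdot \lp$.

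\textbf{Step 1: decompose $\opt$.} Fix an optimal adaptive policy. For each edge $f$, let $\pi_f = \Pr[f \text{ probed}]$, so $\sum_f s_f \pi_f \le B$. For each vertex $u$, let $t_u = \Pr[u \text{ is an endpoint of some probed edge}]$, so $t_u \le \sum_{f\ni u}\pi_f$. Split the reward of each probed edge $e=uv$ according to whether, at the moment $e$ is probed, neither endpoint has been touched, or at least one endpoint has already been touched.

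In the first case, the decision to probe $e$ is independent of $(X_u,X_v)$. So this part contributes at most $\sum_e \bar w_e \Pr[e \text{ probed with both endpoints fresh}] \le \sum_e \bar w_e \pi_e$. This is the value of the feasible fractional solution $y=\pi$, hence at most $\lp$.

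\textbf{Step 2: the key independence fact.} For any vertex $u$, the event $\{u \text{ is ever touched}\}$ is independent of $X_u$. The reason is that the policy's trajectory up to the first probe touching $u$ depends only on the states of other vertices. Consequently $\Pr[u \text{ touched and active}] = p_u t_u$.

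I would extend this to pairs. If $u$ is touched strictly before $v$, then $v$ is still fresh when $u$'s status is learned, so $X_v$ is independent of everything up to $v$'s first touch. In the second case, charge the edge $e=uv$ to the endpoint touched first, say $u$. Its reward is at most $w_{uv}X_uX_v$ restricted to the event that $u$ was touched. Conditioning on the history at $u$'s first touch gives
\[ \EE[\,w_{uv}X_uX_v \mathbb 1\{u \text{ touched first}\}\,] \le w_{uv}\, p_u t_u\, p_v = \bar w_{uv}\, t_u. \]
The same argument with a simultaneous touch (the edge $uv$ itself) is covered by Step 1. So the second part is at most $\sum_{uv}\bar w_{uv}(t_u+t_v)$. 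Making this conditioning rigorous is the step I expect to be the main obstacle: one must argue carefully via the decision tree that "$u$ touched before $v$" and the subsequent choices do not bias $X_v$. I would do this by revealing vertices in the order the policy touches them, and noting that each vertex's state is a fresh independent coin at its first reveal.

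\textbf{Step 3: build a fractional knapsack solution.} Set
\[ y_e = \frac{1}{2d_{max}S}\min\Big\{1,\ \sum_{f\cap e\neq\emptyset}\pi_f\Big\} \]
for each edge $e$. Its size is at most
\[ \frac{1}{2d_{max}S}\sum_e s_e\sum_{f\cap e\ne\emptyset}\pi_f \le \frac{1}{2d_{max}S}\sum_f \pi_f\cdot 2d_{max}\, s_{\max} \le \frac{1}{2d_{max}S}\cdot 2d_{max} S\sum_f \pi_f s_f \le B, \]
using that each $f$ meets at most $2d_{max}$ edges and $s_{\max}\le S s_f$. So $y$ is feasible. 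Since $t_u+t_v \le 2\min\{1,\sum_{f\cap e\ne\emptyset}\pi_f\}$, its value is at least $\frac{1}{4d_{max}S}\sum_{uv}\bar w_{uv}(t_u+t_v)$.

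Combining the three steps gives $\EE[\opt]\le (1+4d_{max}S)\,\lp$. Rounding the fractional solution with greedy/FPTAS loses only another constant, which yields the $O(d_{max}S)$ guarantee.
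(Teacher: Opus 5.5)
Your proof is correct, and it takes a genuinely different route from the paper.

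\textbf{The paper's route.} The paper never compares $\opt(G,B)$ to an LP on $G$ itself. It proceeds in these steps:
\begin{itemize}
\item It edge-colors $G$ into at most $2d_{max}$ matchings.
\item It proves a simulation (decomposition) lemma: the reward $\opt(G,B)$ collects on a subgraph $H$ can be reproduced by a policy on $H$ with budget $2S\cdot B$. Each probe outside $H$ is replaced by probes of cheapest incident $H$-edges, with fake coins for vertices isolated in $H$.
\item By averaging, it picks a matching $M$ with $\EE[\opt(M,2SB)]\ge\frac{1}{2d_{max}}\EE[\opt(G,B)]$.
\item On a matching, probing decisions are independent of rewards, so $\lp(M,2SB)\ge\EE[\opt(M,2SB)]$.
\item Finally it rescales the LP budget, losing another factor $2S$.
\end{itemize}

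\textbf{Your route.} You work directly on $G$. Deferred decisions (an unrevealed vertex is a fresh coin at its first touch) bound each edge's contribution by $\bar w_e$ times the probability that one of its endpoints is touched. You pay for this by spreading each marginal $\pi_f$ over the at most $2d_{max}$ edges meeting $f$. In your argument $S$ enters through the knapsack constraint, not through a budget blow-up.

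\textbf{What each route buys.} Yours is more elementary and self-contained: no Vizing-type coloring, no simulation lemma, no rescaling lemma. Your Step~1/Step~2 split is also unnecessary. Conditioning on the history just before the first touch of $\{u,v\}$ gives, in one line,
\[
\EE[R_e\cdot\ind_{e\in P}]\le\bar w_e\min\Big\{1,\sum_{f\cap e\neq\emptyset}\pi_f\Big\}.
\]
Hence $\EE[\opt]\le 2d_{max}S\cdot\lp$, and the final ratio is $4d_{max}S$ versus the paper's $8d_{max}S$. The paper's route buys reusability. The same decomposition lemma is what reduces the adaptive $O(dens(G)\cdot S^2)$ result to collections of stars. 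Your charging, by contrast, is inherently a comparison with non-adaptive value and is tailored to this theorem.

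\textbf{Two small repairs.}
\begin{itemize}
\item \emph{Parallel edges.} The paper allows parallel edges, so $u$ and $v$ can be first touched together by a parallel copy $e'\neq e$. A later probe of $e$ then falls into neither of your cases. Either the one-line bound above or arbitrary tie-breaking in ``touched first'' covers it, since your conditioning only needs that neither endpoint is revealed before the touching probe.
\item \emph{Pruning large edges.} Greedy or the FPTAS is within a constant of your fractional optimum only if the LP also imposes $y_e=0$ whenever $s_e>B$, as the paper's $\lp(G,B)$ does. Delete such edges first: no policy can probe them, and deleting them does not increase $d_{max}$ or $S$. Then your $y$ respects this constraint.
\end{itemize}
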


Observe that the problem of finding the optimal non-adaptive policy is exactly the classic knapsack problem, which has an $O(1)$-approximation, e.g. \cite{DBLP:journals/jacm/IbarraK75, DBLP:journals/mor/Lawler79}. Thus, the interesting aspect of \Cref{thm-nonadapt} is that the optimal non-adaptive policy $O(d_{max} \cdot S)$-approximates the optimal \emph{adaptive policy}. In other words, we upper bound the \emph{adaptivity gap} of \bayes by $O(d_{max} \cdot S)$. We define the \emph{adaptivity gap} for \bayes by $\max_{\mathcal{I}} \frac{\EE[\opt_A(\mathcal{I})]}{\EE[\opt_{NA}(\mathcal{I})]}$, where $\opt_A$ and $\opt_{NA}$ are the reward obtained by the optimal adaptive and non-adaptive policies, respectively.

Further, we show that our upper bound is asympotically best possible by lower bounding the \emph{adaptivity gap} of \bayes.

\begin{restatable}{lem}{lemadapgap}
	\label{lem-adap-gap}
	The adaptivity gap of \bayes is $\Omega(d_{max} \cdot S)$.
\end{restatable}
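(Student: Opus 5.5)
We prove the lower bound with an explicit family of instances built from many disjoint stars. Each star pairs one cheap, worthless ``test'' edge with $d$ expensive rewarding edges, and its center is active with small probability. An adaptive policy can spend cheap probes to locate an active center and then collect all its heavy edges. A non-adaptive policy cannot tell which star to invest in, so it pays the full heavy cost for edges that are rarely active.

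\textbf{Construction.} Fix parameters $d \ge 2$ and $S \ge 1$, and set $m = \lceil dS/2 \rceil$ and $p = 2/(dS)$. The instance has $m$ disjoint stars. Star $i$ has a center $c_i$ with $p_{c_i} = p$, and $d+1$ leaves, each active with probability $1$. One leaf is joined to $c_i$ by a \emph{test edge} of size $1$ and weight $0$. Each of the other $d$ leaves is joined to $c_i$ by a \emph{heavy edge} of size $S$ and weight $1$. The budget is $B = dS$. Then $d_{max} = d+1$ and the ratio of maximum to minimum edge size is exactly $S$. Because $p = 2/(dS) \le 1$, every probability is valid.

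\textbf{Non-adaptive upper bound.} A non-adaptive policy fixes its probe set $P$ in advance, with total size at most $dS$. So $P$ contains at most $d$ heavy edges, and test edges contribute nothing. Since leaves are always active, a heavy edge $c_i\ell$ has expected reward $\prob[X_{c_i}=1]\cdot 1 = p$. By linearity of expectation,
\[
\EE\big[\textstyle\sum_{e\in P} R_e\big] \le d\cdot p = \tfrac{2}{S}.
\]

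\textbf{Adaptive lower bound.} The adaptive policy probes the test edges of stars $1,2,\dots$ in turn, stopping at the first star whose center is revealed active, or after $\lfloor dS/2\rfloor$ test probes. This phase uses at most $dS/2$ budget. If an active center $c_i$ is found, the policy spends the remaining budget of at least $dS/2$ on heavy edges of star $i$. It can afford at least $\lfloor d/2\rfloor$ of them, each yielding reward $1$ with certainty. The centers are independent, so the probability that some active center is found among the first $\lfloor dS/2\rfloor$ stars is
\[
1-(1-p)^{\lfloor dS/2\rfloor} \ge 1 - e^{-p\lfloor dS/2\rfloor}.
\]
Since $p\lfloor dS/2\rfloor \ge 1/2$ for $dS \ge 2$, this probability is at least $1-e^{-1/2}$, a constant. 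Hence $\EE[\opt_A] \ge (1-e^{-1/2})\lfloor d/2\rfloor = \Omega(d)$.

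\textbf{Conclusion.} The ratio between the two bounds is $\Omega(d)/(2/S) = \Omega(dS) = \Omega(d_{max}\cdot S)$, which proves \Cref{lem-adap-gap}.

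The only step needing care is the non-adaptive bound. It relies on the heavy edges all having the same expected reward $p$ under the prior, and on the zero-weight test edges consuming budget without adding reward. Linearity of expectation then handles any fixed probe set directly. The remaining work is checking floors and constants.
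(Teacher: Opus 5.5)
Your proposal is correct and is essentially the paper's proof: the paper uses the same disjoint-star instance, with edge sizes scaled by $1/S$ (special edge size $s=1/S$, normal edges size $1$, budget $d$, center probability $s/d$). It gives the same argument that any non-adaptive set gets only $d$ rewarding edges of expected reward $O(1/(dS))$ each, while cheap zero-weight probes find an active center with constant probability and then collect $\Omega(d)$ reward.
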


In terms of the size of the graph $G$ itself, we have a different adaptivity gap lower bound.

\begin{restatable}{lem}{lemmatchinglower}
    \label{lem_matching_lower}
    The adaptivity gap of \bayes is $\Omega(\lvert V \rvert^2)$ or $\Omega(\lvert E \rvert)$.
\end{restatable}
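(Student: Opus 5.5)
The plan is to exhibit one family of instances on which the optimal adaptive policy is worth $\Omega(1)$ while every non-adaptive policy is worth $O(1/n^2)$, with $|V| = \Theta(n)$ and $|E| = \Theta(n^2)$. This gives both forms of the bound at once. The gap will come from cheap edges of weight $0$ that cost almost nothing but reveal vertex states, which only an adaptive policy can exploit.

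\textbf{Construction.} Take the complete graph $K_n$ on vertices $v_1,\dots,v_n$, each active independently with probability $p = 1/n$. Every edge of $K_n$ has weight $1$ and size $1$. Add one dummy vertex $z$ with $p_z = 1$. For each $i$, add a ``test'' edge $v_i z$ with weight $0$ and size $\epsilon = 1/n$. Set the budget to $B = 1 + n\epsilon = 2$. Then $|V| = n+1$, $|E| = \binom n2 + n = \Theta(n^2)$, and the edge sizes lie in $\{1/n, 1\}$. Parallel edges and self loops are not needed.

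\textbf{Adaptive lower bound.} The adaptive policy first probes all $n$ test edges, at total cost $1$, which reveals every $X_{v_i}$. If at least two of the $v_i$ are active, it probes one edge of $K_n$ between two active vertices, at cost $1$, and gains reward $1$. Its expected reward is therefore $\prob[\mathrm{Bin}(n,1/n) \ge 2] = 1 - (1-1/n)^n - (1-1/n)^{n-1} \to 1 - 2/e > 0$, which is $\Omega(1)$.

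\textbf{Non-adaptive upper bound.} A non-adaptive policy fixes its probe set $P$ in advance, subject to $\sum_{e\in P} s_e \le 2$. Test edges contribute nothing, since their weight is $0$. Because $K_n$-edges have size $1$, $P$ contains at most two of them. Each such edge $v_iv_j$ has expected reward $\EE[X_{v_i}X_{v_j}] = p^2 = 1/n^2$, using independence of distinct vertices. Hence every non-adaptive policy earns at most $2/n^2$.

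\textbf{Conclusion.} The adaptivity gap is at least $\frac{(1-2/e)}{2/n^2} = \Omega(n^2)$, which is $\Omega(|V|^2) = \Omega(|E|)$. The one step that needs care is the non-adaptive bound. There I must check that no combination of zero-weight test edges helps: the objective is linear in the probed edges, so non-adaptive value is just the sum of the marginal expectations $w_e\,\EE[X_uX_v]$. Reading the statement's ``or'' as ``both'', this single family establishes both forms of the lemma.
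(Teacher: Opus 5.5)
Your proof is correct and uses essentially the same construction as the paper: a complete graph of weight-$1$ edges on vertices with $p_v = 1/n$, plus cheap weight-$0$ edges of total size $1$ that reveal every vertex state, after which the remaining budget buys one edge between two active vertices. The differences are cosmetic: you reveal states through a star to an always-active dummy vertex rather than a perfect matching, which removes the need for $n$ even, and your budget admits two $K_n$-edges non-adaptively instead of one, which changes only the constant.
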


Thus, to improve over our first algorithm, we need to use adaptivity. We consider a class of adaptive policies that have two phases: an exploration phase and an exploitation phase. Roughly speaking, in the exploration phase we first use half of the overall budget to learn information (which vertices are active or not). Then in the exploitation phase, we use the remaining half of the budget to gain reward conditioned on what we learned. See \Cref{def-ee} for a formal definition of such policies. We show that this allows us to improve $d_{max}$-dependence for non-adaptive policies to $dens(G) = \max_{H \subset V(G)} \frac{\lvert E[H] \rvert}{\lvert H \rvert}$ -- the density of the densest subgraph of $G$.

\begin{restatable}{theorem}{thmandadap}\label{thm-and-adap}
	There exists a poly-time algorithm for \bayes that computes an adaptive policy that $O(dens(G) \cdot S^2)$-approximates the optimal adaptive policy, where $dens(G) = \max_{H \subset V(G)} \frac{\lvert E[H] \rvert}{\lvert H \rvert}$ is the density of the densest subgraph of $G$ and $S$ is the ratio between the maximum and minimum edge sizes in $G$.
\end{restatable}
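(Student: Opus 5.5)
We will design a two-phase explore--exploit policy and compare it against $\opt$ through a budget-halving argument. As a preliminary step, rescale the sizes so that every edge has size in $[1,S]$. Then every policy probes at most $B$ edges, and there is a policy probing $\lfloor B/S\rfloor$ edges. If $B< S$, the problem is degenerate, and the best single edge (or a knapsack solution) already gives an $O(S)$-approximation. So we assume $B\ge S$.

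We first upper-bound $\opt$ by a relaxation. Split the reward of $\opt$ into two parts. The first part comes from edges $uv$ that $\opt$ probes when at least one endpoint is already revealed to be active (an ``informed'' probe). The second part comes from edges probed with both endpoints still unknown (a ``blind'' probe). Blind probes are made without any conditioning on $X_u,X_v$ beyond independence. Hence the expected reward of blind probes is at most the value of a fractional knapsack over edges with values $w_{uv}p_up_v$ and sizes $s_{uv}$, budget $B$. This part is captured by the non-adaptive algorithm of \Cref{thm-nonadapt} up to $O(1)$; indeed, a direct knapsack LP suffices.

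For the informed part, charge each informed probe of $uv$ to the first probe that revealed an active endpoint $u$. Such a discovery probe of an edge $e$ incident to $u$ succeeds with probability about $p_u$. The edges probed afterward at $u$ are at most the ones inside the set $A$ of vertices $\opt$ discovered to be active. By the density bound, the number of edges in $G[A]$ is at most $dens(G)\cdot|A|$. Each vertex of $A$ was discovered by some probe, so $|A|\le 2\cdot\#\text{probes}\le 2B$. This gives the core inequality: the reward of informed probes is at most $dens(G)\cdot w_{\max}$ times the expected number of discovered active vertices whose incident edges have active other endpoint. This must be made weighted, which I expect to be the main obstacle.

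The policy works as follows. In the exploration phase, with budget $B/2$, solve a knapsack or submodular-type problem to choose edges maximizing the expected weighted number of newly discovered active vertices. Weight a vertex $v$ by the best value it could unlock, namely $\max_{uv} w_{uv}p_u$, or better, the best knapsack value of its incident edges. In the exploitation phase, with budget $B/2$, run a knapsack FPTAS on the posterior edge values $w_{uv}\Pr[X_u X_v=1\mid\text{observations}]$, restricted to edges with a known-active endpoint.

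To compare against $\opt$, we will use the following argument. Orient each edge of $G[A]$ so every vertex has in-degree at most $\lceil dens(G)\rceil$, which is possible by Hakimi's orientation theorem. Each informed reward is then assigned to a vertex $u\in A$ with at most $dens(G)$ charged edges. Hence the informed reward is at most $dens(G)\cdot\EE[\sum_{u\in A}\max_{v}w_{uv}X_v]$. The exploration phase must approximate the best ``discovery'' value obtainable with budget $B$ using only $B/2$, at the cost of a factor $S$: halving the budget and truncating to whole probes loses at most a factor $O(S)$. The exploitation phase then recovers, for each discovered $u$, its best incident edge, at the cost of another factor $S$ from the possible size mismatch between cheap exploration probes and expensive exploitation probes. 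Combining the two gives $O(dens(G)\cdot S^2)$.

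The hardest step is bounding the optimal adaptive discovery value by an efficiently computable non-adaptive quantity. Vertices discovered by $\opt$ depend adaptively on earlier outcomes. My first attempt would be to argue that discovery probes are blind probes, so their success events have the unconditioned probabilities, and then apply the same fractional-knapsack relaxation to the discovery objective.
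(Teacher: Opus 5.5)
There is a genuine gap, at exactly the step you call the hardest, and the orientation route does not get you past it. Your handling of blind probes is sound; it is the argument of \Cref{lem:unexplored}. The inequality $\text{(informed reward)}\le\lceil dens(G)\rceil\cdot\EE[\sum_{u\in A}\max_v w_{uv}X_v]$ is also true. But no policy can earn its right-hand side. It uses the realized states of neighbours nobody has probed, and exploiting the best incident edge of a discovered $u$ earns $\max_v w_{uv}p_v$, not $\EE[\max_v w_{uv}X_v]$. Take one sure vertex joined to $d$ vertices of probability $1/d$, with unit sizes and weights and budget $2$. The right-hand side is at least $1-1/e$, while $\opt=2/d$, even though $S=1$ and $dens(G)<1$.

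Restricting the maximum to neighbours that $\opt$ itself discovered removes this slack. However, the resulting quantity is as adaptive as $\opt$, and the in-degree bound then works against you. In \Cref{ex:gap_uniform}, $dens(G)<1$ forces all but one rewarded edge of the active star to be oriented into its leaf. So the $\Omega(d)$ benchmark sits on leaves, each with charged value $w_{c\ell}X_c=1$. Under either of your exploration weights, a leaf is worth only $p_c=1/d$. The value is created by discovering the center, which is exactly where the orientation refuses to put charge. You cannot compare this benchmark with your discovery objective without undoing the orientation.

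The exploration side has the same problem. An additive weighted count of discovered vertices is not what exploitation realizes, since all discovered vertices share one budget of $B/2$. The objective exploitation does realize is the knapsack LP on posterior edge values, and it is not adaptive submodular once both endpoints of an edge can be explored. For example, take an edge $uv$ of weight $w$ with $p_u=p_v=p$, competing with a sure item of value $wp$ for one unit of exploitation budget. Exploring $u$ has zero gain at the start, but gain $p(1-p)w$ once $v$ is known to be active. So you have no greedy guarantee, no adaptivity-gap bound, and no reason why halving the budget should cost only $O(S)$; an adaptive policy cannot simply be truncated. Your observation that the first revelation of $v$ succeeds with probability exactly $p_v$ is true by independence, but it does not address any of this.

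The missing idea is the paper's reduction:
\begin{itemize}
\item The decomposition lemma (\Cref{lem-decomp}) shows that restricting $\opt$ to a subgraph costs only a $2S$ factor in budget. Information from outside edges can be obtained by probing the cheapest incident edge inside the subgraph.
\item $G$ is covered by $O(dens(G))$ collections of stars (\Cref{lem-decomp-dens}), and $dens(G)$ enters only by averaging over them.
\item On a star collection it suffices to explore centers (\Cref{lem-center-query}). Then every edge has a single explored endpoint, and $P\mapsto\lp(G\mid P,B)$ is monotone submodular (\Cref{prop:submod}).
\item This submodularity yields the greedy guarantee (\Cref{prop:adaptiveknapsack}) and the adaptivity gap $3$ (\Cref{prop:knapsack_gap}). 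It also justifies splitting a non-adaptive exploration set into $O(B'/B)$ pieces to shrink the budget, which together with LP rescaling gives the $(B/B')^2=\Theta(S^{-2})$ loss.
\end{itemize}

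Two smaller points. First, $B<S$ is not degenerate: pad \Cref{ex:gap_uniform} with a zero-weight edge of size $d+1$, and the best single edge is off by $\Theta(S^2)$. Second, edges of size in $(B/2,B]$ can never be exploited in a half-budget phase, whatever $B$ is; the paper handles them separately in \Cref{lem:starbig}.
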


One special case of particular interest is Bayesian active search, where every edge has size $1$. In this case, we obtain a $O(d_{max})$-approximation by a non-adaptive policy, but a $O(dens(G))$-approximation with an \emph{efficiently computable} adaptive policy. These are the first provable approximation guarantees for Bayesian active search, and we hope that our work showcases the potential of applying algorithmic and modeling techniques from stochastic combinatorial optimization to problems in Bayesian machine learning.

\subsection{Technical overview}\label{sec:overview}

We highlight our main ideas with a running example.

\begin{example}\label{ex:gap_uniform}
    Let $d \in \NN$. Then let $G$ be a collection $d$ stars, each with $d$ leaves. The center of each star is active with probability $\frac{1}{d}$ and each leaf with probability $1$. Further, each edge weight and size is $1$, and our budget is $d$. See \Cref{fig:star_ex_uni}. This completes the description of the Bayesian probing instance.
    
    Thus, our goal in this instance is to probe at most $d$ edges to maximize the expected number of active edges (those with both endpoints active) probed.
\end{example}

    \begin{figure}[h]
        \centering
        \includegraphics[scale = 0.2]{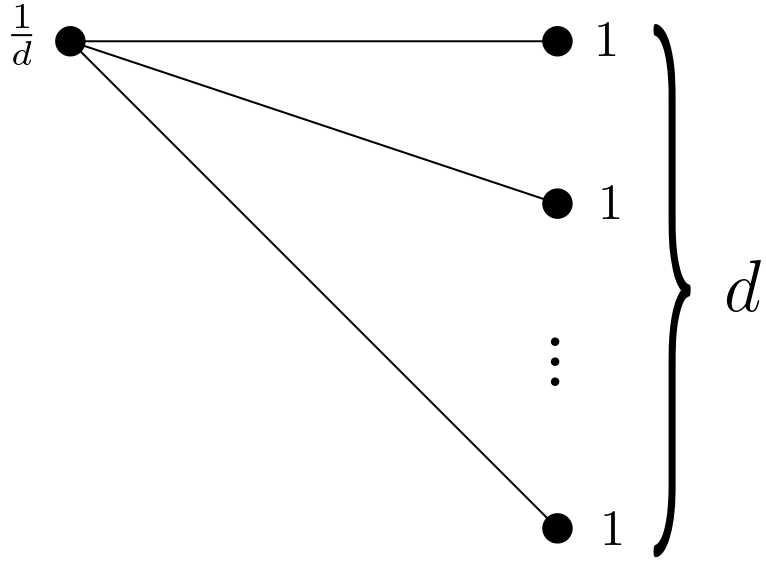}
        \caption{A single star in $G$. Vertices are labeled with their probabilities. All edges have $w_e = 1$ and $s_e = 1$.}
        \label{fig:star_ex_uni}
    \end{figure}

    First, consider probing any edge of a star. Doing so, we learn the realized value of its center. If its center is active, then we now know all edges in this star are active and will give reward $1$. Otherwise, every such edge gives reward $0$. In contrast, prior to probing this edge, every edge of this star gives expected reward $\frac{1}{d}$.

    Interestingly, such stars correspond exactly to \emph{clumps} in the Bayesian active search literature. In the language of Bayesian active search, a clump is a subcollection of items such that either all items in the clump are active or all are inactive (with some probability). Collections of clumps were introduced in \cite{GarnettKXSM12} as a hard instance for Bayesian active search. Thus, the algorithms we will design for \bayes can provably perform well on hard benchmarks for Bayesian active search.

    Going back to our particular instance in \Cref{ex:gap_uniform}, an adaptive policy on this instance can probe one edge per star until it finds a star with an active center. Then, it uses its remaining budget to probe edges in this particular star. The adaptive policy finds an active center within its first $\frac{d}{2}$ probes with at least constant probability, and -- conditioned on this -- probes at least $\frac{d}{2}$ edges that each give reward $1$. This policy has expected reward $\Omega(d)$.

    On the other hand, a non-adaptive policy can only fix a set of $d$ edges to probe in advance, each of which gives expected reward $\frac{1}{d}$. Thus, any non-adaptive policy has expected reward at most $1$, so the adaptivity gap of this example is $\Omega(d) = \Omega(d_{max})$.

    We can see in this example that the benefit of adaptivity is that probing edge immediately tells us whether all $d$ edges in the same star have reward $0$ or $1$. Indeed, our lower bound guaranteed by \Cref{lem-adap-gap} is a slight modification of the above example that takes advantage of the edge sizes and weights as well.

    Further, this example highlights the key difference between \bayes and other stochastic combinatorial optimization problems: In \bayes \emph{the reward obtained from an edge is not independent of the decision to probe that edge}. A common approach in other stochastic combinatorial optimization problems is to define a linear programming (LP) relaxation of the problem and show that the marginals of the optimal adaptive policy certify that the relaxation is a good one. Let's try this approach for Bayesian probing. The natural LP relaxation for \bayes is exactly the knapsack LP
    \begin{equation}\label{lp_knapsack}
		\lp(G, B) = \max_x \{\sum_{e \in E(G)} \EE[R_e] \cdot x_e \mid \sum_{e \in E(G)} s_e \cdot x_e \leq B,\quad x_e = 0 \text{ if } s_e > B ,\quad 0 \leq x \leq 1\}.
	\end{equation}
    Note that $\lp(G,B)$ exactly models the problem of finding the optimal \emph{non-adaptive} policy. Throughout this paper, we use $\lp(G,B)$ to denote the LP itself and its optimal objective value. Similarly, we use $\opt(G,B)$ to denote the optimal adaptive policy and its reward. Consider setting $x_e = \Pr(e \in Q)$, where the event $e \in Q$ is with respect to the optimal adaptive policy $\opt(G,B)$. This solution is feasible for $\lp(G,B)$ and achieves objective value $\sum_{e \in E(G)} \EE[R_e] \cdot \Pr(e \in Q)$. On the instance in \Cref{ex:gap_uniform}, this is $\sum_{e \in E(G)} \frac{1}{d} \cdot \Pr(e \in Q) \leq 1$, whereas $\EE[\opt(G,B)] = \Omega(d)$, so this LP is \emph{not} a good benchmark. The issue is that $\EE[R_e] \cdot \Pr(e \in Q) \neq \EE[R_e \cdot \ind_{e \in Q}]$ since for many edges, the optimal adaptive policy queries those edges only if it \emph{knows} that it will give reward $1$.

    To overcome this issue, instead of strengthening the LP by adding more constraints -- perhaps surprisingly -- we actually \emph{simplify} the LP by restricting it to a simpler subgraph where the LP is a good benchmark. This is in contrast to prior work on generalizations of stochastic knapsack, where a stronger time-indexed LP is used \cite{DBLP:conf/focs/GuptaKMR11, DBLP:journals/mor/Ma18}. Suppose that $G$ is a matching (no two edges share an endpoint). Now probing an edge gives no information about other edges, so the event $\{e \in Q\}$ is independent of the reward $R_e$. This allows us to carry out the idealized calculation above to obtain $LP(G,B) \geq \sum_{e \in E(G)} \EE[R_e] \cdot \Pr(e \in Q) = \sum_{e \in E(G)} \EE[R_e \cdot \ind_{e \in Q}] = \EE[\opt(G,B)]$.

    Our main technical tool is a \emph{decomposition lemma} that allows us to decompose the input graph $G$ into simpler graphs -- say, matchings -- and relate $\opt(G,B)$ to $\opt$ on these simpler graphs (up to a multiplicative violation of the budget).

    \begin{restatable}[Decomposition Lemma]{lem}{lemdecomp}\label{lem-decomp}
    	Given graph $G$, let $G_1, \dots, G_\ell$ be graphs on the same vertex set as $G$ such that $E(G_1) \cup \dots \cup E(G_{\ell}) = E(G)$. Then we have $\EE[\opt(G_1, 2S \cdot B)] + \dots + \EE[\opt(G_\ell, 2S \cdot B)] \geq \EE[\opt(G,B)]$ for any budget $B$.
    \end{restatable}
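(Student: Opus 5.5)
We will prove the Decomposition Lemma by simulation. From $\opt(G,B)$ we build, for each $i$, an adaptive policy $\pol_i$ on $G_i$ with budget $2S\cdot B$. The goal is
\[\sum_i \EE[\pol_i] \ \geq\ \EE[\opt(G,B)].\]
Fix an assignment of each edge $e\in E(G)$ to one index $i(e)$ with $e\in E(G_{i(e)})$; this is possible because the edge sets cover $E(G)$. Let $Q$ be the random probe set of $\opt(G,B)$ and split it as $Q=\bigcup_i Q_i$, where $Q_i=\{e\in Q: i(e)=i\}$. Then $\EE[\opt(G,B)]=\sum_i \EE[\sum_{e\in Q_i} R_e]$. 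It therefore suffices for $\pol_i$ to collect at least the reward $\opt$ collects on $Q_i$, in expectation.

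\textbf{The obstacle.} $\pol_i$ may probe only edges of $G_i$, yet $\opt$'s decisions depend on outcomes of edges outside $G_i$. The plan is for $\pol_i$ to run $\opt(G,B)$ in its head, with two rules.
\begin{itemize}
\item When $\opt$ probes an edge $e\in E(G_i)$, $\pol_i$ actually probes $e$. This reveals both endpoints and collects $R_e$.
\item When $\opt$ probes an edge $e=uv\notin E(G_i)$, $\pol_i$ needs the states $X_u,X_v$ to continue. It obtains each state it does not yet know by probing some edge of $G_i$ incident to that vertex. If no such edge exists, the vertex is irrelevant to $G_i$'s rewards, so $\pol_i$ samples its state from its prior using fresh independent randomness.
\end{itemize}
Because vertices are independent, the simulated transcript has the same distribution as $\opt$'s true transcript, jointly with the true states of every vertex touched by $G_i$. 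Hence $\pol_i$ probes every edge of $Q_i$ and gains $\EE[\sum_{e\in Q_i}R_e]$. Extra probes only add nonnegative reward.

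\textbf{Budget.} Each probe by $\opt$ of an edge $e$ costs $s_e\ge s_{\min}$ and reveals at most two new vertices. Revealing each vertex costs $\pol_i$ at most one probe of size at most $s_{\max}$. So each $\opt$ probe of cost $s_e$ is charged to at most $2s_{\max}\le 2S\cdot s_e$ of $\pol_i$'s budget. Summing over $\opt$'s probes, $\pol_i$ spends at most $2S\cdot B$. A direct probe of $e\in G_i$ is already covered by this charge, since it reveals both endpoints of $e$.

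\textbf{Subtlety.} If $\opt$ is randomized or wants to re-probe edges, $\pol_i$ simply skips probes it has already made. Knowing a vertex state in advance never hurts the simulation.

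\textbf{Main step to check carefully.} The delicate point is the coupling claim: the fake samples for vertices with no $G_i$-edges must not bias the reward on $Q_i$. They do not, because $R_e$ for $e\in Q_i$ depends only on the endpoints of $e$, which lie in $G_i$ and are always revealed truthfully. The event $e\in Q_i$ is measurable with respect to the simulated transcript, and that transcript is distributed exactly as $\opt$'s, jointly with the true $G_i$-vertex states. Summing over $i$ then gives the lemma.
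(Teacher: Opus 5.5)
Your proof is correct and follows essentially the same simulation argument as the paper. You run $\opt(G,B)$ on each $G_i$, probe $G_i$-edges directly, and replace each outside probe by probes of incident $G_i$-edges (or fake prior samples for $G_i$-isolated vertices), charging at most $2S\cdot s_e$ per original probe. The differences are cosmetic: the paper first passes through an auxiliary graph $G'$ with zero weights on $E(G)\setminus E(G_i)$ and sums with an inequality over the overlapping cover, whereas you simulate $\opt(G,B)$ directly and assign each edge to a unique $G_i$ to get an exact split.
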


    Thus, to design our non-adaptive policy, we decompose the input graph $G$ into $O(d_{max})$-many matchings. On some such matching, say $M \subset G$, by averaging we have $LP(M, 2S \cdot B) \geq \EE[\opt(M,2S \cdot B)] = \Omega(\frac{1}{d_{max}}) \cdot \EE[\opt(G,B)]$. We use $LP(M, 2S \cdot B)$ (up to a budget rescaling argument) as the benchmark for our algorithm, which will be a standard knapsack LP rounding algorithm. This gives the $O(S \cdot d_{max})$-approximation guaranteed by \Cref{thm-nonadapt}.

    How can we improve on this with an adaptive policy? Let's go back to \Cref{ex:gap_uniform}, and slightly modify the adaptive policy that achieves expected reward $\Omega(1)$: First, we pick $\frac{d}{2}$ arbitrary stars and probe one edge per star. Then, \emph{conditioned on the outcomes of those probes}, we non-adaptively probe the $\frac{d}{2}$ edges with the largest expected reward. As before, with at least constant probability, in the first phase, we find an edge that gives reward $1$ and thus a center that is active. In this case, in the second phase, we are guaranteed to probe $\frac{d}{2}$ edges that each give reward $1$. We define \emph{explore-exploit (\ee)} policies to capture this type of policy.

    \begin{restatable}[Explore-Exploit (\ee) Policy]{definition}{defee}\label{def-ee}
        For any $B > 0$, an explore-exploit policy with budget $B$ on graph $G = (V,E)$  has two phases, an exploration phase and exploitation phase.
    	\begin{itemize}
    		\item \textbf{(Exploration}) First, the policy has budget $\frac{B}{2}$ for adaptive \emph{vertex probes}. In each vertex probe the policy chooses a vertex $v \in V$, and learns the realized value of $X_v$ (obtaining no reward). We say the policy \emph{explores} vertex $v$. The cost of this probe is $s_v = \min_{uv \in E} s_{uv}$.
    		\item \textbf{(Exploitation)} After the policy has explored $P_V \subset V$ and learned the values of $X_v$ for all $v \in P_V$ -- conditioned on these outcomes -- the policy \emph{non-adaptively} probes an \emph{exploitation set} $P_E \subset E$ of size at most $\frac{B}{2}$ (with respect to the edge sizes $s_e$ for $e \in E$).
    	\end{itemize}
    	The total reward of such a policy is the rewards of all edges probes in the exploitation phase, $\EE[\sum_{e \in P_E} R_e]$.
    \end{restatable}

    It turns out that \ee policies are sufficient\footnote{up to the standard knapsack issue of needing to take the single best item sometimes} to obtain a $O(S^2 \cdot dens(G))$-approximation for \bayes, as in \Cref{ex:gap_uniform}. To achieve this, we again simplify our LP benchmark to make it more amenable to analysis. Instead of restricting to a matching as in the non-adaptive case, now we restrict to a collection of stars and condition on the outcome of the exploration phase.
    
    In a collection of stars, conditioning on the exploration phase has a nice structure. Consider exploring the center vertex of some star. If it is active, then the rewards of all incident edges are now independent (since their rewards are completely determined by their leaf values). If it is inactive, then all incident edges give reward $0$ and can be ignored.
    
    Using this idea, an analogous calculation to the matching case shows show that the knapsack LP conditioned on the exploration phase gives a good LP relaxation for the reward obtained in the subsequent exploitation phase. Thus, we view the exploration phase as adaptively exploring vertices to maximize the knapsack LP value conditioned on the outcomes of the explored vertices. Since vertices are active independently, the upshot of this view is that we can leverage techniques for stochastic probing with \emph{independent} items to efficiently find a good exploration set. 

    To summarize, in a collection of stars, our adaptive \ee policy will use the first half of the budget to maximize a conditional knapsack LP benchmark. Then we will round the resulting knapsack LP to pick the edges to exploit with the remaining half of the budget. To lift the above algorithm on collections of stars to general graphs, we again use the decomposition lemma to decompose an arbitrary graph $G$ into $O(dens(G))$-many collections of stars.

	\section{Non-adaptive policies}\label{sec-nonadapt}

Our goals in this section are to show that the adaptivity gap for \bayes is $\Theta(d_{max} \cdot S)$ and to give a non-adaptive algorithm that achieves this approximation ratio. We first give our algorithm (\Cref{thm-nonadapt}) and then prove a matching lower bound on the adaptivity gap (\Cref{lem-adap-gap}).

\subsection{Non-Adaptive algorithm}

First, we observe that the problem of finding a non-adaptive policy that approximates the optimal \emph{non-adaptive} policy is just the classic knapsack problem: Our goal is to choose a subset of edges $Q$ of total size at most $B$ to maximize $\sum_{e \in Q} \EE[R_e]$. For this problem a fully polynomial time approximation scheme (FPTAS) is known, so we can obtain a $(1+ \epsilon)$-approximate solution in time $poly(n, \frac{1}{\epsilon})$ \cite{DBLP:journals/jacm/IbarraK75, DBLP:journals/mor/Lawler79}. However, it will be convenient for our analysis here and when we later design adaptive policies to use the standard $2$-approximate greedy algorithm.

\medskip
\hrule
\smallskip
\hrule
\medskip
\knapsack: Given input graph $G$ to \bayes with budget $B > 0$:
Pick the best of the following two solutions:
\begin{itemize}
	\item Sort the edges in decreasing order of \emph{value density}: $\frac{\EE[R_e]}{s_e}$. Probe edges in this order until we use up the budget.
	\item Probe the single edge with largest expected reward.
\end{itemize}
\hrule
\smallskip
\hrule
\medskip

It is known that this algorithm has the following LP-based performance guarantee, see e.g. Exercise 3.1 in \cite{williamson2011design}:

\begin{prop}\label{prop_knapsack}
	Given input graph $G$ and budget $B > 0$, \knapsack outputs a solution with reward satisfying $\knapsack(G,B) \geq \frac{1}{2} \cdot \lp(G,B)$.
\end{prop}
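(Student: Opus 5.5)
This is the classical greedy-knapsack argument, applied with values $v_e = \EE[R_e]$ and sizes $s_e$. We work only with edges of size at most $B$, since the LP forces $x_e = 0$ for the others and they can never be probed.

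First we describe the structure of an optimal solution to $\lp(G,B)$. Sort the eligible edges by value density $\frac{\EE[R_e]}{s_e}$ in nonincreasing order, as $e_1, e_2, \dots$. A standard exchange argument gives an optimal LP solution that is greedy: it sets $x_{e_1} = \dots = x_{e_{k-1}} = 1$, sets $x_{e_k} = \frac{B - \sum_{i<k} s_{e_i}}{s_{e_k}} \in [0,1)$, and zeroes the rest. Here $k$ is the first index at which the cumulative size exceeds $B$. If no such index exists, every eligible edge fits, the first solution of \knapsack attains the LP value exactly, and we are done. The exchange argument runs as follows. If some $x_{e_j} < 1$ while some later-indexed $x_{e_{j'}} > 0$, shift size $\delta$ of fractional mass from $e_{j'}$ to $e_j$. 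This keeps the budget constraint satisfied and changes the objective by $\delta\bigl(\frac{\EE[R_{e_j}]}{s_{e_j}} - \frac{\EE[R_{e_{j'}}]}{s_{e_{j'}}}\bigr) \geq 0$.

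Next we compare the LP value with the two solutions. From the greedy structure,
\[
\lp(G,B) \;\leq\; \sum_{i<k} \EE[R_{e_i}] + \EE[R_{e_k}].
\]
The first \knapsack solution probes edges in density order until the budget is used up. I will interpret this as skipping any edge that no longer fits; stopping at the first edge that overflows gives the same bound, since only the prefix matters. Either way it probes at least $e_1, \dots, e_{k-1}$, so its reward is at least $\sum_{i<k} \EE[R_{e_i}]$. The second solution probes the single eligible edge of largest expected reward, so its reward is at least $\EE[R_{e_k}]$, and this edge is feasible because $s_{e_k} \leq B$. The algorithm takes the better of the two, so its reward is at least the average of the two, which is at least $\frac{1}{2}\lp(G,B)$.

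The rewards here are linear in expectation, so no independence between edges is needed: a non-adaptive probe set $Q$ earns exactly $\sum_{e\in Q}\EE[R_e]$, and the argument above is the deterministic knapsack argument verbatim. The only real care needed is the $s_e > B$ caveat. It is handled by restricting both the algorithm and the single-item candidate to eligible edges, which matches the constraint $x_e = 0$ for $s_e > B$ in the LP. There is no genuine obstacle; the exchange argument for the LP structure is the only step needing a line of justification.
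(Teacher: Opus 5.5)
Your argument is correct and is the standard greedy-knapsack proof that the paper relies on: the paper gives no proof of its own, citing Exercise 3.1 of Williamson--Shmoys, and it uses the same fractional-greedy structure of the knapsack LP (due to Dantzig) in the proof of Proposition~\ref{prop:submod}. Restricting both candidate solutions to edges with $s_e \le B$ is the right reading of the algorithm, since it matches the LP's constraint $x_e = 0$ for $s_e > B$.
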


Suppose we run \knapsack to output a non-adaptive policy that obtains expected reward at least $\frac{1}{2} \lp(G,B)$. However, as we saw in the technical overview (\Cref{sec:overview}), $\lp(G,B)$ is \emph{not} a good relaxation for $\EE[\opt(G,B)]$ in general, but it is a good relaxation if $G$ is a matching. Our strategy to overcome this is to choose an appropriate matching, say $M \subset G$, so we have $\lp(G,B) \geq \lp(M,B) \geq \EE[\opt(M,B)]$. It remains to relate back $\EE[\opt(M,B)]$ with $\EE[\opt(G,B)]$.

The challenge here is that the edges in $E(G) \setminus E(M)$ not only allow $\opt(G,B)$ to obtain more reward, but also more information, which allows $\opt(G,B)$ to better exploit the edges in $E(M)$. To bound this discrepancy in both reward and information, we use our decomposition lemma.

\lemdecomp*

This allows us, by averaging, to choose the matching $M$ such that $\EE[\opt(M,2S \cdot B)] = \Omega(\frac{1}{d_{max}}) \cdot \EE[\opt(G,B)]$. Then to complete the proof, we argue that re-scaling down the budget by $2S$ scales the LP solution value by the same factor.

In the remainder of this section, we prove the decomposition lemma and then \Cref{thm-nonadapt}.

\subsection{Proof of decomposition lemma: \Cref{lem-decomp}}

First, we prove the decomposition lemma. The proof of the decomposition lemma follows by applying the next lemma to each of the graphs $G_1, \dots, G_\ell$, which states that by increasing our budget by a $2S$-factor, we can simulate an adaptive policy from $G$ on a subgraph $H$ of $G$. The main idea is to simulate the information gained from probing edges in $E(G) \setminus E(H)$ by instead probing incident edges in $E(H)$. These edges may have size at most a $S$-factor larger than their counterparts in $E(G) \setminus E(H)$, which leads to the increased budget in the lemma.

\begin{lem}\label{lem-decomp-sub}
	Given graphs $G, H$ on the same vertex set such that $E(H) \subset E(G)$ and budget $B$, there exists a policy on graph $H$ with budget $2S \cdot B$ achieving expected reward at least $\EE[\sum_{e \in P \cap E(H)} R_e]$, where $P$ is the adaptively-chosen set of edges probed by policy $\opt(G,B)$.
\end{lem}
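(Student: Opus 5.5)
The plan is a direct simulation argument. Fix the optimal adaptive policy $\opt(G,B)$ on $G$. I will build a (possibly randomized) policy $\pol$ on $H$ with budget $2S\cdot B$ that runs $\opt(G,B)$ step by step and feeds it answers with exactly the correct joint distribution. The coupling will ensure that every edge of $P\cap E(H)$ is also probed by $\pol$.

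Call a vertex $v$ \emph{covered} if it has at least one incident edge in $E(H)$, and \emph{uncovered} otherwise. At the start, $\pol$ privately samples independent $\tilde X_v\sim \mathrm{Bernoulli}(p_v)$ for every uncovered $v$. For covered $v$, the simulated value is the true $X_v$. The simulated vector $(\hat X_v)_v$ therefore has the same law as $(X_v)_v$, since vertices are independent.

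$\pol$ runs $\opt(G,B)$ on $\hat X$. Whenever $\opt$ asks to probe an edge $e=uv$, it must be told $\hat X_u,\hat X_v$, and $\pol$ proceeds as follows.
\begin{itemize}
\item If $e\in E(H)$ and $\pol$ has not probed $e$ yet, $\pol$ probes $e$ itself. This costs $s_e$, collects $R_e$, and reveals $X_u,X_v$.
\item If $e\notin E(H)$, then for each endpoint $w\in\{u,v\}$ that is covered and whose value is not yet known, $\pol$ probes an arbitrary $H$-edge incident to $w$. This costs at most $s_{\max}\le S\cdot s_e$ per endpoint and reveals $X_w$.
\item For uncovered endpoints, $\pol$ reports the pre-sampled $\tilde X_w$. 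This is consistent because no $H$-probe ever reveals an uncovered vertex.
\end{itemize}
Each probe of $\opt$ costs $\pol$ at most $2S\cdot s_e$, and $\pol$ never re-probes an edge. So $\pol$ respects the budget $2S\cdot B$.

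For the reward, the transcript seen by the simulated $\opt$ is a deterministic function of $\hat X$ (and of any internal randomness of $\opt$). Hence the simulated probe set $\hat P$ is distributed as $P$, and in fact $(\hat P,\hat X)\overset{d}{=}(P,X)$. Every $e\in\hat P\cap E(H)$ has two covered endpoints, so its reward $w_e\hat X_u\hat X_v=R_e$ is the true reward. Each such edge is probed by $\pol$ at some point: either when $\opt$ requests it, or earlier as an auxiliary probe, in which case its reward was collected then. All other probes of $\pol$ give nonnegative reward. Therefore
\[
\EE\big[\mathrm{reward}(\pol)\big]\;\ge\;\EE\Big[\sum_{e\in\hat P\cap E(H)}R_e\Big]\;=\;\EE\Big[\sum_{e\in P\cap E(H)}R_e\Big].
\]
Finally, $\pol$ uses internal randomness only through the $\tilde X$'s. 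By averaging, some fixed choice of these samples does at least as well, which yields a deterministic policy.

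The main obstacle is the coupling step. It has to be checked carefully that mixing true values on covered vertices with fake values on uncovered ones reproduces the exact joint law of $\opt$'s observations, even though $\opt$'s decisions adapt to both. It also has to be checked that the true rewards of $E(H)$ edges depend only on true values. Independence of the vertices and the fact that $H$-probes never touch uncovered vertices are what make this work. The budget accounting, including skipping already-known endpoints and already-probed edges, is routine.
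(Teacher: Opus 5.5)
Your proposal is correct and follows essentially the same simulation as the paper: each probe of a non-$H$ edge is replaced by probes of $H$-edges at its endpoints that touch $H$, at cost at most $S$ times per endpoint, and endpoints isolated in $H$ get independently pre-sampled values. The paper routes this through an auxiliary instance $G'$ with zero weight on $E(G)\setminus E(H)$ and simulates $\opt(G',B)$, whereas you simulate $\opt(G,B)$ directly and make explicit the bookkeeping (no re-probing, the equality in law $(\hat P,\hat X)\overset{d}{=}(P,X)$, derandomization) that the paper leaves implicit.
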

\begin{proof}
	We prove the lemma by modifying the Bayesian probing problem and the initial policy given by $\opt(G,B)$. First, we restrict $\opt(G,B)$ so it can only obtain rewards from edges in $E(H)$. To do this, consider the graph $G'$ with differs from $G$ only on its edge weights, which we denote by $w'_e$ for $G'$. For edges $e \in E(H)$, we set $w'_e = w_e$. For all other edges, $e \in E(G) \setminus E(H)$, we set $w'_e = 0$. This completes the description of $G'$.
	
	We claim that $\EE[\opt(G',B)] \geq \EE[\sum_{e \in P \cap E(H)} R_e]$. To see this, note that a feasible policy in $G'$ is to follow the probes of $\opt(G,B)$; we obtain the same information from each probe, but we only obtain rewards from edges in $E(H)$.
	
	It remains to show that by multiplying the budget by $2S$, we can modify $\opt(G',B)$ such that we only probe edges in $E(H)$, so we do not even need information from edges in $E(G) \setminus E(H)$. To do so, consider the following policy on $H$: First, for every vertex $v \in V(G)$ that is isolated in $H$ but not in $G$ we sample $X_v' \sim X_v$ independently. Now if $\opt(G',B)$ probes an edge in $E(H)$, then we also probe the same edge, obtaining the same information and reward at the same cost.
 
    Otherwise $\opt(G', B)$ probes an edge in $E(G) \setminus E(H)$. Note that $\opt(G',B)$ only obtains information about the endpoints of this edge but no reward. For each endpoint, say $v$, there are two cases. If $v$ is incident to some edge in $H$, then we probe the cheapest such edge in $H$. Note that this costs at most $S$-times more than the corresponding edge probed by $\opt(G',B)$ Otherwise $v$ is isolated in $H$ but not in $G$. Then we do no probe, but we observe $X_v'$. Considering both endpoints, we learn the values of both endpoints at cost at most $2S$-times the corresponding edge probe by $\opt(G', B)$.
    
    Note that in our policy, some of our information is simulated if there are isolated in $H$, but such vertices do not affect the reward of our policy or $\opt(G',B)$ -- they only affect the distribution of the chosen edge probes of the latter, which we simulate with an identical distribution. To summarize, we have constructed an adaptive policy on $H$ which certifies $\EE[\opt(H, 2S \cdot B)] \geq \EE[\opt(G', B)] \geq \EE[\sum_{e \in P \cap E(H)} R_e]$, as required.
\end{proof}

The decomposition lemma follows quickly by applying the above lemma to each subgraph $G_1, \dots, G_\ell$.

\begin{proof}[Proof of \Cref{lem-decomp}]
	Applying \Cref{lem-decomp-sub} to each graph $G_1, \dots, G_\ell$, we have $\EE[\opt(G_i, 2S \cdot B)] \geq \EE[\sum_{e \in P \cap E(G_i)} R_e]$ for all $i = 1, \dots, \ell$. Summing over all $i$ and using the fact $E(G_1) \cup \dots \cup E(G_\ell) = E(G)$ gives the desired result:
	\[\sum_i \EE[\opt(G_i, 2S \cdot B)] \geq \sum_i \EE[\sum_{e \in P \cap E(G_i)} R_e] \geq \EE[\sum_{e \in Q} R_e]= \opt(G,B).\]
\end{proof}

\subsection{Analysis of \knapsack: \Cref{thm-nonadapt}}
Now using the decomposition lemma, we complete the proof of \Cref{thm-nonadapt} by showing \knapsack has the desired $O(d_{max} \cdot S)$ approximation ratio.

\thmnonadapt*
\begin{proof}
	First, consider a proper edge coloring of $G$ using $\ell = d_{max} + m_{max} \leq 2 d_{max}$ colors, where $m_{max}$ is the maximum number of parallel edges incident on any vertex. Each color class forms a matching. The existence of such an edge coloring is guaranteed by \cite{vizingmulti}. Let $G_1, \dots, G_\ell$ be the subgraphs of $G$ with the same vertex set as $G$ but the edges of each color class, respectively. Applying \Cref{lem-decomp} to the $G_i$'s, there exists some such subgraph, say $G_{i^*}$ such that $\EE[\opt(G_{i^*}, 2S \cdot B)] \geq \frac{1}{\ell} \EE[\opt(G,B)] \geq \frac{1}{2 d_{max}} \EE[\opt(G,B)]$.
	
	Now let $\alg$ denote the reward obtained by $\knapsack(G,B)$. Our goal is to show that $\EE[\alg] \geq \Omega(\frac{1}{d_{max} \cdot S}) \cdot \EE[\opt(G,B)]$. Applying \Cref{prop_knapsack} gives
	\[\EE[\alg] \geq \frac{1}{2} \lp(G,B) \geq \frac{1}{2} \lp(G_{i^*}, B) \geq \frac{1}{4S} \lp(G_{i^*}, 2S \cdot B),\]
	where we use the facts that $G_{i^*}$ is a subgraph of $G$ and that scaling the budget of the knapsack LP down by a $2S$-factor decreases the optimal value by at most a $2S$-factor. This is given in the following lemma, which we prove in \Cref{sec: appendix_knapsack}.

    \begin{restatable}{lem}{lemknapsackrescale}\label{lem: knapsack_rescale}
        For any graph $G$ and budgets $0 < B \leq B'$ such that every edge in $G$ has size at most $B$, we have $\lp(G,B) \geq \frac{B}{B'} \cdot \lp(G,B')$.
    \end{restatable}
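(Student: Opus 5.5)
The plan is to scale an optimal solution of the larger-budget LP down uniformly. Let $x^*$ be an optimal solution of $\lp(G,B')$ and define $y = \frac{B}{B'} \cdot x^*$. We then check that $y$ is feasible for $\lp(G,B)$, and compare objective values.

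\textbf{Feasibility.} Since $0 < \frac{B}{B'} \leq 1$ and $0 \leq x^* \leq 1$, we have $0 \leq y \leq 1$. For the knapsack constraint,
\[\sum_e s_e y_e = \frac{B}{B'} \sum_e s_e x^*_e \leq \frac{B}{B'} \cdot B' = B.\]
The remaining constraint is $y_e = 0$ whenever $s_e > B$. This is exactly where the hypothesis that every edge of $G$ has size at most $B$ is used: the constraint is vacuous for $\lp(G,B)$. Without this hypothesis the scaling argument would fail, since $x^*$ could put mass on edges with $B < s_e \leq B'$. (The corresponding constraint for $\lp(G,B')$ is also vacuous, since $B' \geq B$.)

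\textbf{Objective.} The objective is linear, so
\[\lp(G,B) \geq \sum_e \EE[R_e] \, y_e = \frac{B}{B'} \sum_e \EE[R_e] \, x^*_e = \frac{B}{B'} \cdot \lp(G,B').\]

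Each step is routine; the only point needing care is the $s_e > B$ constraint, which is discharged directly by the hypothesis. In the application within the proof of \Cref{thm-nonadapt}, edges of size greater than $B$ can be discarded beforehand, since they contribute to neither $\lp(G,B)$ nor $\opt(G,B)$.
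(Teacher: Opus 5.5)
Your proof is correct and is essentially the paper's proof: scale an optimal solution $x^*$ of $\lp(G,B')$ by $B/B'$, check the box, knapsack, and (vacuous, by hypothesis) pruning constraints, and use linearity of the objective. Your closing remark is also right. In the proof of \Cref{thm-nonadapt} the lemma is applied to $G_{i^*}$ with budgets $B \le 2S\cdot B$, and deleting edges of size greater than $B$ beforehand leaves $\opt(G,B)$ unchanged while not increasing $S$ or $d_{max}$, so the lemma's hypothesis does hold there.
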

	
	Next, we use the fact that $G_{i^*}$ is a matching to argue that $\lp(G_{i^*},2S \cdot B) \geq \EE[\opt(G_{i^*},2S \cdot B)]$. Consider the setting of the $x$-variables given by $x_e = \prob(e \in P)$ for all $e \in E(G_{i^*})$, where the probabilities are with respect to the policy $\opt(G_{i^*},2S \cdot B)$. Since the size of $P$ is almost surely at most $2S \cdot B$ by definition of $\opt(G_{i^*},2S \cdot B)$, $x$ is feasible for $\lp(G_{i^*},2S \cdot B)$. Further, this solution achieves objective value
	\[\sum_{e \in E(G_{i^*})} \EE[R_e] \cdot \Pr(e \in P) = \sum_{e \in E(G_{i^*})} \EE[R_e \cdot \ind_{e \in P}] = \EE[\opt(G_{i^*},2S \cdot B)],\]
	where we use the fact that $G_{i^*}$ is a matching, so the decision to probe edge $e$ is independent of its reward $R_e$ (because previous probes have revealed no information about its endpoints). Finally, we use the property $\EE[\opt(G_{i^*}, 2S \cdot B)] \geq \frac{1}{2 d_{max}} \EE[\opt(G,B)]$, we can conclude
	\[\EE[\alg] \geq \frac{1}{4S} \lp(G_{i^*}, 2S \cdot B) \geq \frac{1}{4S} \EE[\opt(G_{i^*}, 2S \cdot B)] \geq \frac{1}{8 d_{max} S} \EE[\opt(G, B)],\]
	as required.
\end{proof}

\subsection{Adaptivity gap lower bound}

Our lower bound construction that achieves \Cref{lem-adap-gap} is a slight modification of \Cref{ex:gap_uniform}. In that example, each component was a star, and further each edge size and weight was $1$. This gave an adaptivity gap lower bound of $\Omega(d_{max})$. To obtain the stronger lower bound of $\Omega(d_{max} \cdot S)$, we make one edge on each star have $0$ weight but very small size. Such an edge is useless for a non-adaptive policy since it gives no reward, but it is useful for an adaptive policy to learn if the center is active or not. This idea is formalized below.

\lemadapgap*
\begin{proof}
	Let $d \in \NN$ and $s \in (0,1)$ be parameters. Consider the graph $G$ consisting of $\frac{d}{s}$ copies of a star with $d+1$ leaves such that the center of each star has probability $\frac{s}{d}$ and every leaf has probability $1$. Each such star has one \emph{special edge} with $w_e = 0$ and $s_e = s$. The remaining edges of the star are \emph{normal edges} with $w_e = 1$ and $s_e = 1$. This completes the description of $G$. See \Cref{fig:star_ex}.
    \begin{figure}[h]
        \centering
        \includegraphics[scale = 0.2]{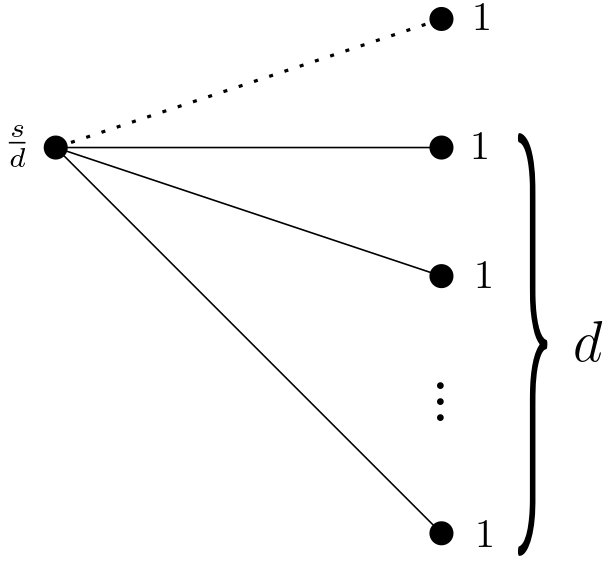}
        \caption{A single star in $G$. Vertices are labeled with their probabilities. The dotted edge is the special edge with $w_e = 0$ and $s_e = s$. The solid edges are the $d$ remaining edges with $w_e = 1$ and $s_e = 1$.}
        \label{fig:star_ex}
    \end{figure}
 
    To complete the instance, let the query budget be $d$. Note that $G$ has max degree $d + 1$ and $S = \frac{1}{s}$, so it suffices to show a gap of $\Omega(\frac{d}{s})$. Note that the special edges each have expected reward $0$ and the normal edges each have expected reward $\frac{s}{d}$, so the optimal non-adaptive policy will query $d$-many normal edges, and obtain expected reward $d \cdot \frac{s}{d} = s$.
	
	On the other hand, consider the following adaptive policy: We process the stars in arbitrary order. For each considered star, we query the special edge from that star. This gives no reward, but we learn if the center of the star is active or not. If it is active, then we use the remainder of our budget the query the remaining normal edges of this star. Otherwise, if the center is not active, then we move on to the next star. This completes the description of the adaptive policy
	
	To analyze the expected reward of this policy, observe that with constant probability, we find a star with active center within the first $\frac{d}{2s}$ queries of special edges (because we query disjoint special edges, these queried edges have independent outcomes). This costs us at most $\frac{d}{2s} \cdot s = \frac{d}{2}$ towards our budget. Conditioned on the center of this star being active, each such normal edge gives us reward $1$. In this case, we query at least $d/2$ remaining edges of this star, each of which gives reward $1$. We conclude, the expected reward of this policy is $\Omega(d)$. This gives the desired gap of $\Omega(\frac{d}{s})$.
\end{proof}
	
	\section{Adaptive policies}\label{sec-adapt}

In the previous section, we showed that the adaptivity gap for this problem is $\Theta(d_{max} \cdot S)$. Now, we show how to leverage adaptivity to go beyond this lower bound to achieve a $O(dens(G) \cdot S^2)$ by combining a non-adaptive policy with a class of limited-adaptivity policies called explore-exploit (\ee) policies. We recall our main theorem and the definition of \ee policies here

\thmandadap*

\defee*

We recall that in to \Cref{ex:gap_uniform}, where all stars are identical in number of edges, vertex probabilities, and edge sizes/weights (so $S = 1$), we argued that there exists a $O(1)$-approximate \ee policy. The above result says that this holds much more generally -- we can always find an \ee policy (or non-adaptive one) that is $O(dens(G) \cdot S^2)$-approximate for any input graph (not just a collection of identical stars).

Our proof strategy will be to use the decomposition lemma again to decompose the input graph into simpler subgraphs. In our non-adaptive algorithm, these simpler subgraphs were matchings, but here they will be \emph{collections of stars} (graphs where each connected component is a star graph). We will show later that we can decompose any graph $G$ into $O(dens(G))$-many collections of stars.

Thus, it remains to design a good adaptive policy for a collection of stars. To achieve this, we first prove some structural properties of \ee policies on collections of stars.

\subsection{Structural properties for collections of stars}

We begin with a few structural properties of adaptive policies on collections of stars, which will facilitate our algorithm and analysis. Precisely, we say $G$ is a collection of stars with center set $C$ if we can write $G = \cup_{c \in C} G_c$, where $G_c$ is a star graph with center $c$ (so we index each star by its center vertex) and leaf set $L_c$. Thus, each $G_c$ consists of its center $c$ connected individually to each of its leaves $\ell \in L_c$. Analogously as before, we define $\optee(G,B)$ to be the optimal \ee policy (and its reward) on graph $G$ with budget $B$. First we relate optimal \ee policies with typical optimal policies.

\begin{lem}\label{lem-ee-sim}
	For any graph $G$ and budget $B > 0$, we have $\optee(G, 4B) \geq \opt(G, B)$.
\end{lem}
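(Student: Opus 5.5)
We simulate $\opt(G,B)$ by an \ee policy with budget $4B$ that explores in its first phase and then exploits a fixed set in its second phase. Let $P$ denote the random set of edges probed by $\opt(G,B)$. In the exploration phase (budget $2B$), we run $\opt(G,B)$ virtually. Whenever it probes an edge $uv$, we explore both endpoints $u$ and $v$ (skipping any already explored). This reveals $X_u$ and $X_v$, and hence $R_{uv}=w_{uv}X_uX_v$, which is exactly the information $\opt(G,B)$ would receive. The virtual run can therefore continue along the same branch of its decision tree, and the sequence of virtual probes has the same distribution as in $\opt(G,B)$. Exploring $v$ costs $s_v=\min_{e\ni v}s_e\le s_{uv}$, so the exploration cost per probed edge is at most $2s_{uv}$. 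Since $\sum_{e\in P}s_e\le B$ almost surely, the total exploration cost is at most $2B$, which is exactly the exploration budget $\frac{4B}{2}$.

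In the exploitation phase (budget $2B$), we take $P_E$ to be the set of edges in $P$ with $R_e>0$. This set is determined by the outcomes of the exploration, so choosing it is a legitimate non-adaptive choice conditioned on those outcomes. Its size is at most $\sum_{e\in P}s_e\le B\le 2B$. Each such edge is known to be active, so it yields reward $w_e$. The total reward is therefore $\sum_{e\in P}R_e$ pathwise, and taking expectations gives $\optee(G,4B)\ge\EE[\sum_{e\in P}R_e]=\opt(G,B)$.

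The only point needing care is that the information revealed to the virtual $\opt$ matches its real execution. This holds because exploring both endpoints reveals a superset of what an edge probe reveals, namely exactly $X_u$ and $X_v$, so the decision tree of $\opt(G,B)$ can be followed faithfully. If $\opt$ is randomized, we fix its internal randomness first. There is some slack here: the exploitation step could even have used budget $B$, but the factor $4$ is needed to cover exploration at cost $2B$ inside the half-budget allotted to it.
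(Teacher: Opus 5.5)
Your proof is correct and is essentially the paper's argument: explore both endpoints of every edge that $\opt(G,B)$ probes (cost at most $s_u+s_v\le 2s_{uv}$, hence at most $2B$ in total), then exploit $\opt(G,B)$'s probe set $P$ with full hindsight. The only difference is that you exploit just the edges of $P$ already known to be active rather than all of $P$, which changes nothing since the pathwise reward $\sum_{e\in P}R_e$ is the same.
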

\begin{proof}
	Consider the policy $\opt(G,B)$, which we denote by $\opt$ for convenience. We show how to simulate $\opt$ with an \ee policy $\pol$ with four times the budget. Suppose $\opt$ probes edge $uv \in E$. Then $\pol$ will explore $u$ and $v$. This ensures that $\pol$ has at least as much information as $\opt$ up until this point, and this costs $\pol$ at most $s_u + s_v \leq 2 s_{uv}$ for each such $uv$. Thus, at the end of the exploration phase, we have the same information as $\opt$ and have spent at most $2B$ units of the budget on vertex probes.
	
	Then, in the exploitation phase, we go back and -- with hindsight of all explored vertices -- we exploit exactly the same edges that $\opt$ did. Note that the exploitation phase costs at most $B$. This completes the description of $\pol$. Since we exploit the same edges as $\opt$ with the same information, we have $\pol \geq \opt$
\end{proof}

In light of the above lemma, it suffices to approximate the optimal \ee policy (up to a budget-scaling argument). Further, if $G$ is a collection of stars, then we can -- up to a constant factor in the budget -- we show that it suffices to explore the centers of the stars (as was the case in \Cref{ex:gap_uniform}). Intuitively, this should be true because exploring a leaf only helps us decide if its single incident edge is worth exploiting, where we could just exploit this edge, anyways. On the other hand, exploring centers is much more valuable, because they tell us about multiple edges at once. The next lemma formalizes this intuition via a coupling argument.

\begin{lem}\label{lem-center-query}
	For any \ee policy $\pol$ on a collection of stars with budget $B$, there exists another \ee policy $\tilde{\pol}$ with budget $2B$ such that:
	\begin{itemize}
		\item The rewards of both policies satisfies $\EE[\tilde{\pol}] \geq \EE[\pol]$.
		\item $\tilde{\pol}$ only explores centers of stars.
	\end{itemize}
\end{lem}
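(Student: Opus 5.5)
The plan is a direct simulation (coupling) argument. Leaf explorations by $\pol$ will be replaced by fake, independently sampled leaf values, and the corresponding star edges will be added to the exploitation set. This works because in a collection of stars, a leaf $\ell$ of star $G_c$ is incident to a single edge $c\ell$, so its exploration cost is $s_\ell=s_{c\ell}$. Exploring $\ell$ therefore costs exactly as much as simply exploiting the one edge whose reward it informs.

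\emph{Construction of $\tilde{\pol}$.} Before starting, $\tilde{\pol}$ samples for every leaf $\ell$ an independent copy $X'_\ell\sim X_\ell$. It then runs $\pol$ step by step. When $\pol$ explores a center $c$, $\tilde{\pol}$ genuinely explores $c$ and feeds the true $X_c$ to $\pol$. When $\pol$ explores a leaf $\ell$, $\tilde{\pol}$ does nothing, feeds $X'_\ell$ to $\pol$, and records $\ell$ in a set $L^*$. Let $P_E$ be the exploitation set that $\pol$ outputs on this simulated transcript. $\tilde{\pol}$ exploits $\tilde P_E=P_E\cup\{c\ell:\ell\in L^*\}$.

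\emph{Budget.} The exploration cost of $\tilde{\pol}$ is at most that of $\pol$, which is at most $B/2\le B$. Its exploitation cost is at most $s(P_E)+\sum_{\ell\in L^*}s_{c\ell}\le B/2+\sum_{\ell\in L^*}s_\ell\le B/2+B/2=B$. Both phases thus fit within budget $2B$, and only centers are explored. If parallel edges $c\ell$ were present, I would add only the cheapest one together with those already in $P_E$. This is the one place needing care; in the star decomposition used later, I expect each leaf to be joined to its center by exactly one edge.

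\emph{Reward.} The key observation is that the transcript $(X_c)_{c\text{ explored}},(X'_\ell)_{\ell\in L^*}$ seen by the simulated $\pol$ has exactly the same distribution as the real transcript of $\pol$. Indeed, vertices are independent and $X'_\ell\sim X_\ell$. Hence $P_E$ under $\tilde{\pol}$ has the same joint law with the explored center values as under $\pol$. I would then compare rewards edge by edge, writing $\EE[\pol]=\sum_e\EE[R_e\ind_{e\in P_E}]$.

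For an edge $c\ell$ with $\ell\notin L^*$ (in either run), $X_\ell$ is independent of the entire transcript, so the conditional expected reward given the transcript is the same in both runs. This conditional expectation equals $w_{c\ell}p_\ell X_c$ if $c$ was explored and $w_{c\ell}p_\ell p_c$ otherwise. Consequently $\EE[R_{c\ell}\ind_{c\ell\in P_E}]$ is equal in the two runs.

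For an edge $c\ell$ with $\ell\in L^*$, $\tilde{\pol}$ always exploits it and collects $R_{c\ell}$, so its contribution is $\EE[R_{c\ell}\ind_{\ell\in L^*}]\ge \EE[R_{c\ell}\ind_{\ell\in L^*,\,c\ell\in P_E}]$, using nonnegativity of rewards. The events $\{\ell\in L^*\}$ depend only on the transcript up to that point, so again the laws match across the two runs. Since rewards of distinct edges in the union $\tilde P_E$ are counted once and are nonnegative, summing over edges gives $\EE[\tilde{\pol}]\ge\EE[\pol]$.

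The main obstacle is making the ``same conditional expected reward'' step rigorous when $\ell\in L^*$ in $\pol$'s real run but the fake value $X'_\ell$ drives $\pol$'s decisions in $\tilde{\pol}$. I would handle this by conditioning on the full transcript and noting that the true $X_\ell$ for $\ell\in L^*$ is independent of the fake transcript in $\tilde{\pol}$. This is harmless precisely because $\tilde{\pol}$ exploits $c\ell$ unconditionally for such $\ell$.
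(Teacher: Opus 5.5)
Your proposal is correct and is essentially the paper's proof: the paper also fake-explores leaves using independent copies $X'_\ell$. It exploits $(P\setminus P_L)\cup E_L$, which is exactly your set $P_E\cup\{c\ell:\ell\in L^*\}$, and it uses $s_\ell=s_{c\ell}$ to bound the extra exploitation cost by $B/2$. It compares rewards through the same independence of the true leaf values from the fake transcript, with your edge-by-edge split into $\ell\notin L^*$ and $\ell\in L^*$ matching the paper's two sums. Your parallel-edge caveat corresponds to the paper's implicit assumption that each leaf has a unique center-leaf edge, which holds for the forest-based decomposition used later.
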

\begin{proof}
	We define $\tilde{\pol}$ as follows: Our goal is to simulate the exploration phase of $\pol$. In the exploration phase of $\pol$, if $\pol$ explores a center, then $\tilde{\pol}$ does as well. If $\pol$ explores a leaf, say $\ell$, then $\tilde{\pol}$ simulates exploring $\ell$ by sampling $X_\ell'$ identically distributed as $X_\ell$, and using the outcome of $X_\ell'$ to simulate the subsequent adaptive decisions of $\pol$. We say $\tilde{\pol}$ \emph{fake-explores} leaf $\ell$. This concludes the description of $\tilde{\pol}$. Note that we use at most $\frac{B}{2}$ budget in the exploration phase of $\tilde{\pol}$, since we do not pay anything for fake explorations.
	
	Next, we define the exploitation phase of $\tilde{\pol}$. Before doing that, we first introduce our coupling between $\pol$ and $\tilde{\pol}$. Let $C$ and $L  = \cup_{c \in C} L_c$ be the collections of all centers and leaves respectively. We let $(X_v)_{v \in V}$ be the realized $X$-values of the vertices for $\pol$ and $(\tilde{X}_v)_{v \in V}$ for $\tilde{\pol}$, respectively. We couple the outcomes for the centers, so $X_c = \tilde{X}_c$ for all $c \in C$. For the leaves, recall that $(X_\ell')_{\ell \in L}$ determines the simulated explorations of $\tilde{\pol}$. Here we couple $X_\ell = X_\ell'$ for all $\ell \in L$, while the $\tilde{X}_\ell$'s remain uncoupled, so they come from the same distribution as the $X_\ell$'s but are independent of the $X_\ell$'s. This concludes the description of the coupling.

    By definition of the coupling, the set of centers (and their outcomes) explored by $\pol$ and $\tilde{\pol}$ are the same per-realization. Similarly, the fake-explored leaves by $\tilde{\pol}$ are exactly those explored by $\pol$ with the same outcomes. Now let $P \subset E$ be the edges exploited by $\pol$. We partition $P = P_L \cup (P \setminus P_L)$, where $P_L$ is the subset of exploited edges such that their leaf-endpoint (note since the input graph is a collection of stars, every edge is between a center and one of its leaves) was previously explored by $\pol$. Further, define $E_L$ to be \emph{all} edges such that their leaf-endpoint was explored by $\pol$, so $P_L \subset E_L$. By our coupling, $E_L$ is also exactly the edges whose leaf-endpoint was fake-explored by $\tilde{\pol}$.
    
    Now we are ready to define the exploitation phase of $\tilde{\pol}$: in $\tilde{\pol}$, we exploit all edges $P \setminus P_L$ and $E_L$. This completes the description of $\tilde{\pol}$.
    
    First we claim that this exploitation set is feasible (i.e. it costs at most $B$). First, $P \setminus P_L$ costs at most $\frac{B}{2}$ since $\pol$ has budget $B$, and $P$ is exactly the exploitation set of $\pol$. Second, note that if $\pol$ explores a leaf, say $\ell$, it pays $s_\ell = s_{c\ell}$, where $c\ell$ is the unique center-leaf edge incident on $\ell$. Thus, $\pol$ pays at most $\frac{B}{2}$ budget exploring leaves, which is the same cost as simply \emph{exploiting} all of their incident edges, $E_L$.
    
	It remains to analyze the expected reward of $\tilde{\pol}$, which we write as
    \[ \sum_{c\ell \in E} \EE[\tilde{R}_{c\ell} \cdot \ind_{c\ell \in P \setminus P_L}] + \sum_{c\ell \in E} \EE[\tilde{R}_{c\ell} \cdot \ind_{c\ell \in E_L}],\]
    where we define $\tilde{R}_{c\ell} = r_{c\ell} \cdot \tilde{X}_c \tilde{X_\ell}$ for any $c\ell \in E$. By our coupling, we have $X_c  = \tilde{X}_c$ for every center $c$ almost surely. Further, the events $\{c\ell \in P \setminus P_L\}$ and $\{c\ell \in E_L\} \iff \{\pol \text{ explores $\ell$}\}$ are both independent of $X_\ell$ and $\tilde{X}_\ell$. For the first sum, we have
    \begin{align*}
        \sum_{c\ell \in E} \EE[\tilde{R}_{c\ell} \cdot \ind_{c\ell \in P \setminus P_L}] &= \sum_{c\ell \in E} \EE[r_{c\ell} \cdot \tilde{X}_c \tilde{X}_\ell \cdot \ind_{c\ell \in P \setminus P_L}]\\
        &= \sum_{c\ell \in E} \EE[r_{c\ell} \cdot X_c \tilde{X}_\ell \cdot \ind_{c\ell \in P \setminus P_L}]\\
        &= \sum_{c\ell \in E} \EE[r_{c\ell} \cdot X_c \cdot \ind_{c\ell \in P \setminus P_L}] \cdot  \EE[\tilde{X}_\ell]\\
        &= \sum_{c\ell \in E} \EE[r_{c\ell} \cdot X_c \cdot \ind_{c\ell \in P \setminus P_L}] \cdot  \EE[X_\ell]\\
        &= \sum_{c\ell \in E} \EE[r_{c\ell} \cdot X_c X_\ell \cdot \ind_{c\ell \in P \setminus P_L}],
    \end{align*}
    which is exactly the expected reward that $\pol$ obtains from exploiting $P \setminus P_L$. The same calculation for the second sum gives $\sum_{c\ell \in E} \EE[\tilde{R}_{c\ell} \cdot \ind_{c\ell \in P \setminus E_L}] = \sum_{c\ell \in E} \EE[r_{c\ell} \cdot X_c X_\ell \cdot \ind_{c\ell \in E_L}]$. Since $P_L \subset E_L$, we can further bound 
    \[\sum_{c\ell \in E} \EE[\tilde{R}_{c\ell} \cdot \ind_{c\ell \in P \setminus E_L}] \geq \sum_{c\ell \in E} \EE[r_{c\ell} \cdot X_c X_\ell \cdot \ind_{c\ell \in P_L}],\]
    which is exactly the expected reward that $\pol$ obtains from exploiting $P_L$. Combining our bounds for both sums gives $\EE[\tilde{\pol}] \geq \EE[\pol]$, as required.
\end{proof}

Thus, we define $\opteec(G,B)$ to be the optimal \ee policy (and its reward) on \emph{collection of stars} $G$ with budget $B$ \emph{that only explores the centers of stars}. We can compose the above two lemmas to obtain our main structural property for collections of stars.

\begin{cor}\label{cor-star-opt}
    For any collection of stars $G$ and budget $B > 0$, we have $\EE[\opteec(G, 8B)] \geq \EE[\opt(G,B)]$.
\end{cor}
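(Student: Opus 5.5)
The plan is to compose \Cref{lem-ee-sim} with \Cref{lem-center-query}, keeping track of the budget factors.

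First, apply \Cref{lem-ee-sim} to the collection of stars $G$ with budget $B$. This gives an \ee policy $\pol$ on $G$ with budget $4B$ satisfying $\EE[\pol] \geq \EE[\opt(G,B)]$. Concretely, we may take $\pol = \optee(G,4B)$. The simulation in that proof is itself a valid \ee policy with budget $4B$, and its expected reward dominates that of $\opt(G,B)$.

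Next, since $G$ is a collection of stars, apply \Cref{lem-center-query} to $\pol$. This yields an \ee policy $\tilde{\pol}$ with budget $2 \cdot 4B = 8B$ such that $\tilde{\pol}$ explores only centers of stars and $\EE[\tilde{\pol}] \geq \EE[\pol]$. Because $\tilde{\pol}$ is a feasible center-only \ee policy with budget $8B$, the definition of $\opteec$ gives $\EE[\opteec(G,8B)] \geq \EE[\tilde{\pol}]$. Chaining the inequalities gives $\EE[\opteec(G,8B)] \geq \EE[\tilde{\pol}] \geq \EE[\pol] \geq \EE[\opt(G,B)]$.

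The main point to check is a bookkeeping one: the objects compared must match. \Cref{lem-ee-sim} is stated as an inequality between optimal values rather than as the existence of a specific policy. It should therefore be read as guaranteeing an \ee policy with budget $4B$ whose expected reward is at least $\EE[\opt(G,B)]$, and the optimal such policy qualifies. We also need that \Cref{lem-center-query} applies to an arbitrary \ee policy on a collection of stars with an arbitrary budget, here $4B$. Both hold as stated, so no real obstacle is expected.
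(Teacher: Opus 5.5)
Your proposal is correct and is exactly the paper's argument: the corollary comes from composing \Cref{lem-ee-sim} (an \ee policy with budget $4B$ dominates $\opt(G,B)$) with \Cref{lem-center-query} applied at budget $4B$ (a center-only \ee policy with budget $8B$ dominates that one). The one point that both you and the paper leave implicit is that $\tilde{\pol}$ uses internal randomness for its fake explorations; this is harmless, because some deterministic center-only policy does at least as well as $\tilde{\pol}$ on average, so $\EE[\opteec(G,8B)] \geq \EE[\tilde{\pol}]$ still holds.
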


\subsection{Exploration phase}

Starting from \Cref{cor-star-opt}, our goal is to understand \ee policies that only explore centers of stars. As before, let $G = \cup_{c \in C} G_c$ be a collection of stars with center set $C$. Suppose we have already explored centers $P \subset C$. How should we decide which edges to exploit (non-adaptively, conditioned on the outcomes of $P$)?

As a warm-up, assume that we will only exploit edges in the stars incident to $P$. This is because our final algorithm will randomize between an \ee policy and non adaptive one, so the non adaptive policy will be responsible for capturing the reward from edges not incident to $P$ (since our exploration phase gives us no extra information on such edges).

For the centers in $P$ which are inactive, we know that exploiting any edge in those stars will give zero reward, so we may ignore such stars. For the active centers in $P$, any incident edge, say $c\ell$ now has reward conditionally-distributed as $w_{c\ell} \cdot X_\ell$, rather than $w_{c\ell} \cdot X_c X_\ell$ before. Importantly, note that the rewards of such edges are conditionally independent of each other. We let $A \subset C$ denote the subset of active centers.

It follows -- if we are only exploiting edges incident to $P \cap A$ -- the best edges to exploit are exactly the optimal solution of the knapsack instance with one item per edge incident to $P \cap A$, say $c\ell$, with reward $w_{c\ell} \cdot \EE[X_\ell]$ and size $s_{c\ell}$. This motivates the following definition to capture conditioning on an active exploration set $A \subset C$. For any collection of stars $G = \bigcup_{c \in C} G_c$ and center set $C' \subset C$, we define $G \mid C'$ to be the subgraph $\bigcup_{c \in C'} G_c$, where we update the probabilities $p_c \rightarrow 1$ for every $c \in C'$. In other words, we condition on $C'$ being the set of active centers and only consider those stars.

In particular, the natural knapsack LP on instance $G \mid C'$ is given by
\begin{equation}\label{lp-cond}
	\lp(G \mid C', B) = \max_x \{\sum_{c\ell \in E(G \mid C')} \EE[w_{c\ell} \cdot X_\ell] \cdot x_{c\ell} \mid \sum_{e \in E(G \mid C')} s_e \cdot x_e \leq B,\quad x_e = 0 \text{ if } s_e > B ,\quad 0 \leq x \leq 1\}.
\end{equation}
In an \ee-policy with budget $B$, the exploitation phase non-adaptively probes edges of total size at most $B/2$. Thus, given that we already explored centers $P$ and observed that $P \cap A \subset C$ was the active subset, we can simply run $\knapsack(G \mid P \cap A, B/2)$ to choose a near-optimal exploitation set (again, recalling that at this point we only want to get reward from active probed stars). This gives expected conditional reward $\Omega \big( \lp(G \mid P \cap A, B/2) \big)$.

The upshot of the above argument is that we can view the exploration phase as adaptively choosing a set of centers to probe $P \subset C$ to maximize $\EE[\lp(G \mid P \cap A, B/2)]$, where $A \subset C$ is the subset of active centers. We will show that this function has a useful property called \emph{adaptive submodularity} that makes it easy to optimize adaptively. Informally, adaptive submodularity means that the expected marginal gain of probing an element only decreases as we probe more elements conditioned on any partial realization of the past probes.

\paragraph{Adaptive submodular maximization}

We set up the preliminaries needed to define adaptive submodularity. Suppose we have a ground set $Y$ of elements such that each element $y \in Y$ is in one of two possible states: inactive or active. Further, we have some probability distribution over the subset of active elements, say $\mathcal{A}$ on $2^Y$,

Now let $f: 2^Y \times 2^Y$ be a function. We interpret the first argument as the choice of our algorithm and the second as the realization of the active elements. Looking ahead, our algorithm will adaptively probe elements, observe if the probed element is active or not, and then probe the next element. We introduce \emph{partial realizations} of active elements to capture the partial information revealed by our algorithm. 

A partial realization of a subset $Y' \subset Y$ is a subset $A_{Y'} \subset Y'$, which specifies the active elements in $Y'$. For subsets $Y_1 \subset Y_2 \subset Y$ with partial realizations $A_{Y_1}$ and $A_{Y_2}$, respectively, we say these partial realizations are \emph{consistent} if $A_{Y_2} \cap Y_1 = A_{Y_1}$ (i.e. if they agree on the active elements in the smaller subset).

Then we define, the marginal gain of adding element $y \in Y$ to a subset $Y' \subset Y$ of already-probed elements with partial realization $A_{Y'}$ by
\begin{equation}\label{eq: gain}
  \gain(y, Y' \mid A_{Y'}) = \EE_{A \sim \mathcal{A}}[f(Y' \cup \{y\}, A) \mid A \cap Y' = A_{Y'}] - \EE_{A \sim \mathcal{A}}[f(Y', A) \mid A \cap Y' = A_{Y'}].  
\end{equation}
We are now ready to define adaptive submodularity.

\begin{restatable}[Adaptive Submodularity]{definition}{adaptivesubmodularity}\label{definition:adaptivesubmodularity}
    The function $f: 2^Y \times 2^Y \rightarrow \mathbb{R}_{\geq 0}$ is adaptive submodular (with respect to distribution $\mathcal{A}$ on $2^Y$) if for any subsets $Y_1 \subset Y_2 \subset Y$, $y \notin Y_2$, and consistent partial realizations $A_{Y_1}$ and $A_{Y_2}$  of $Y_1$ and $Y_2$, respectively, we have
    \[\gain(y, Y_1 \mid A_{Y_1}) \geq \gain(y, Y_2 \mid A_{Y_2}),\]
    where $\gain(\cdot, \cdot \mid \cdot)$ is defined as in (\ref{eq: gain}).
\end{restatable}

We note that one can define adaptive submodularity for more general state spaces (i.e. not only elements being active or inactive), but we chose to specialize our definition to prevent excess notation. See \cite{DBLP:journals/jair/GolovinK11} for a more general definition.

Adaptive submodularity is a useful property, which enables greedy-type algorithms to achieve good performance guarantees for a variety of optimization problems \cite{DBLP:journals/jair/GolovinK11, hellerstein2021tight, esfandiari2021adaptivity, tang2022beyond, cui2023minimum}. Of particular interest to us is the problem of maximizing an adaptive submodular function subject to a knapsack constraint (\adaptiveknapsack).

\begin{definition}[\adaptiveknapsack]\label{definition:adaptiveknapsack}
    In an instance of \adaptiveknapsack, we are given an adaptive submodular function $f: 2^Y \times 2^Y \rightarrow \mathbb{R}_{\geq 0}$ with respect to distribution $\mathcal{A}$ on $2^Y$, element costs $c_y \geq 0$ for $y \in Y$, and a budget $B \geq 0$. The goal is to adaptive probe elements $P \subset Y$ of total cost at most $B$ to maximize $\EE_{A \sim \mathcal{A}, P}[f(P,A)]$.
\end{definition}

This problem admits a $O(1)$-approximation by a natural adaptive greedy algorithm with respect to value density (marginal gain of adding an item divided by its cost) along with an edge case of picking the single best item. This is given in the next proposition due to \cite{amanatidis2020fast}.

\begin{prop}\label{prop:adaptiveknapsack}
    Given an oracle that computes $\gain(\cdot, \cdot \mid \cdot)$ for any inputs in polynomial time, there exists a polynomial time algorithm for \adaptiveknapsack that adaptively probes $P \subset Y$ of total cost at most $B$ such that $\EE[f(P, A)] \geq \frac{1}{9} \cdot \EE[f(P^*,A)]$, where $P^*$ is the optimal adaptive policy. 
\end{prop}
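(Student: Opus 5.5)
The plan is the standard ``density greedy plus best single item'' argument, adapted to adaptive policies in the way Golovin--Krause analyze adaptive greedy. I will also assume $f$ is \emph{adaptive monotone}, meaning $\gain(\cdot,\cdot\mid\cdot)\ge 0$. This holds for the function $\EE[\lp(G\mid P\cap A,B/2)]$ we apply it to, since adding active centers only enlarges the LP.

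First discard every element with $c_y>B$. The algorithm flips a fair coin. With probability $\tfrac12$ it probes the single element $y^*$ maximizing $\gain(y,\emptyset\mid\emptyset)$. With probability $\tfrac12$ it runs \emph{adaptive density greedy}: given the current probed set $P$ and its observed realization $A_P$, it probes the element maximizing $\gain(y,P\mid A_P)/c_y$ among the remaining elements that still fit in the budget. I randomize rather than take the better of the two because the expected value of the adaptive greedy is not directly computable. All steps are polynomial given the $\gain$ oracle.

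For the analysis, consider a fictitious \emph{overflowing greedy}. It runs density greedy ignoring the budget and stops right after the first element whose cost makes the total exceed $B$. Fix any partial realization $A_P$ reached by this process.
\begin{itemize}
\item \textbf{Key inequality.} I will prove
\[\EE[f(P^*,A)\mid A_P]\le \EE[f(P,A)\mid A_P]+B\cdot\max_y \frac{\gain(y,P\mid A_P)}{c_y}.\]
To do this, run $P^*$ ``on top of'' $P$. By adaptive monotonicity, $\EE[f(P^*,A)]\le\EE[f(P\cup P^*,A)]$. By adaptive submodularity, each element that $P^*$ adds contributes expected gain at most $\gain(y,P\mid A_P)$, evaluated at the smaller set $P$. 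Summing these gains weighted by the probability that $P^*$ probes each element, and using that $P^*$ has cost at most $B$ almost surely, gives the bound.
\item \textbf{Integration.} View the greedy's progress as a function of the budget spent. The inequality says the rate of increase is at least $(\opt-\text{current value})/B$. Integrating over budget up to $B$, exactly as in Golovin--Krause's proof for cardinality, gives that the overflowing greedy achieves expected value at least $(1-1/e)\,\EE[f(P^*,A)]$.
\end{itemize}
The subtle point is that the elements have different costs and the stopping time is random. I would handle this by interpolating linearly over cost within each greedy step.

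Now split the overflowing greedy's value into two parts: the value of the feasible prefix, which the real greedy actually obtains, and the marginal gain of the last, overflowing element $y$. By adaptive submodularity, that marginal gain is at most $\gain(y,\emptyset\mid\emptyset)\le \gain(y^*,\emptyset\mid\emptyset)=\EE[f(\{y^*\},A)]$. Hence
\[\EE[\text{greedy}]+\EE[f(\{y^*\},A)]\ge(1-1/e)\,\EE[f(P^*,A)].\]
Averaging over the coin then gives expected value at least $\tfrac{1-1/e}{4}\,\EE[f(P^*,A)]\ge\tfrac19\,\EE[f(P^*,A)]$.

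The main obstacle is the key inequality together with the integration step. The inequality requires carefully coupling an arbitrary adaptive $P^*$ with the greedy's realization, so that the conditional-expectation bookkeeping stays valid. The integration has to cope with nonuniform costs and the random stopping point, with the last element exceeding the budget. I would first attempt it via the continuous-budget argument described above. If that proves messy, I would fall back to the weaker but cleaner bound of $\tfrac12\,\EE[f(P^*,A)]$ for the overflowing greedy, obtained by a direct charging argument; this still yields the final ratio $\tfrac18\ge\tfrac19$.
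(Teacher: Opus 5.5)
Your route is genuinely different from the paper's. The paper gives no argument of its own: it imports the $9$-approximation of Amanatidis et al., which covers possibly \emph{non-monotone} adaptive submodular objectives, and the definition of \adaptiveknapsack indeed imposes no monotonicity. You instead give a self-contained Golovin--Krause-style analysis of cost-density greedy plus the best singleton. That analysis relies on adaptive monotonicity, in the concatenation step $\EE[f(P^*,A)]\le\EE[f(P\cup P^*,A)]$ and in saying the real greedy dominates its feasible prefix. So it proves the proposition only for adaptive monotone $f$.

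The restriction is harmless here. The paper's only use is $f(P,A)=\lp(G\mid P\cap A,B/2)$, whose gain $p_c\bigl(\lp(G\mid A_{C'}\cup\{c\},B/2)-\lp(G\mid A_{C'},B/2)\bigr)$ is nonnegative. In fact $f(\cdot,A)$ is monotone for every fixed $A$, which the paper itself uses in \Cref{prop:knapsack_gap}. In exchange you get a better constant than $\tfrac19$. Averaging over the coin turns $\EE[\text{greedy}]+\EE[f(\{y^*\},A)]\ge(1-1/e)\,\EE[f(P^*,A)]$ into a factor $\tfrac12(1-1/e)$, not $\tfrac14(1-1/e)$. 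Likewise your fallback gives $\tfrac14$, not $\tfrac18$.

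One precision fix: your key inequality, as stated conditioned on $A_P$, is false under adaptive monotonicity alone. Take $f(Y,A)=C+|Y\cap A|-\tfrac12|Y\setminus A|$ with unit costs, budget $k$, and all $p_y=\tfrac12$. Every gain equals $\tfrac14$. At a greedy state whose single probed element is inactive, an optimal $k$-set avoiding that element has conditional value $C+k/4$, which exceeds the right-hand side $C-\tfrac12+k/4$.

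What your ``run $P^*$ on top of $P$'' argument actually yields is the version averaged over the greedy's random state:
$\EE[f(P^*,A)]\le\EE[f(P_t,A)]+B\cdot\EE\bigl[\max_y\gain(y,P_t\mid A_{P_t})/c_y\bigr]$.
Here $P_t$ is the greedy stopped just before the first item whose cost would push the total past $t$. This is legitimate because that stopping rule is itself an adaptive policy, and it is exactly what your integration over $t\in[0,B]$ with linear interpolation needs. For the LP objective, pointwise monotonicity makes even your conditional form true.
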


To apply the above proposition, we show that our function of interest, $f(P,A) = \lp(G \mid P \cap A, B)$ is adaptive submodular, and we give an efficient oracle for $\gain$.

\begin{lem}\label{lem:adaptivesubmod}
    For any collection of stars $G$ with center set $C$ and budget $B \geq 0$, the function $f: 2^C \times 2^C \rightarrow \mathbb{R}_{\geq 0}$ defined by $f(P,A) = \lp(G \mid P \cap A, B)$ is adaptive submodular. Further, there exists a polynomial time algorithm that, given any $C' \subset C$, element $c \notin c'$, and partial realization $A_{C'}$ of $C$, computes $\gain(c, C', A_{C'})$.
\end{lem}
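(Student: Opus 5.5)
We reduce adaptive submodularity to ordinary submodularity of a deterministic set function, using the independence of the centers. Since the $X_c$ are independent across centers, conditioning on a partial realization $A_{C'}$ of the explored set $C'$ leaves the states of unexplored centers distributed as their priors. Moreover, $f(P,A)$ depends on $A$ only through $P\cap A$. Hence for $c\notin C'$,
\[
\gain(c, C' \mid A_{C'}) = p_c\big(\lp(G\mid A_{C'}\cup\{c\},B) - \lp(G\mid A_{C'},B)\big),
\]
because $f(C',A)=\lp(G\mid A_{C'},B)$ deterministically given the conditioning. If $c$ is inactive, $f(C'\cup\{c\},A)$ equals the same value; if $c$ is active (probability $p_c$), it becomes $\lp(G\mid A_{C'}\cup\{c\},B)$.

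Now take $C_1\subset C_2$ with consistent realizations, so $A_{C_1}=A_{C_2}\cap C_1\subset A_{C_2}$. Adaptive submodularity then reduces to showing that the set function $g(D)=\lp(G\mid D,B)$ on $D\subset C$ is submodular, i.e., $g(D_1\cup\{c\})-g(D_1)\ge g(D_2\cup\{c\})-g(D_2)$ for $D_1\subset D_2$, $c\notin D_2$.

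We first identify $g(D)$ explicitly. $\lp(G\mid D,B)$ is the fractional knapsack LP on the item set $E_D=\bigcup_{c\in D}E(G_c)$, consisting of edges $c\ell$ with value $v_{c\ell}=w_{c\ell}p_\ell$ and size $s_{c\ell}$. The item values do not depend on $D$, since conditioning only sets $p_c\to1$ for $c\in D$. Items with $s_e>B$ are forced to $0$; the item set is fixed and independent of $D$, so we can simply discard them. Thus $g(D)=h(E_D)$, where $h(F)$ is the optimal fractional knapsack value over item set $F$ with capacity $B$. Since $D\mapsto E_D$ is a union of disjoint blocks, submodularity of $g$ follows from submodularity of $h$ on sets of items, applied with the block $E(G_c)$ added to $E_{D_1}\subset E_{D_2}$. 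Submodularity of $h$ for adding a block follows by iterating single-item submodularity along a chain.

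The main step is proving that the fractional knapsack value $h$ is submodular in the available item set. We write $h$ via its greedy description, or equivalently as the concave function
\[
h(F)=\int_0^B \rho_F(t)\,dt,
\]
where $\rho_F(t)$ is the value density of the item occupying capacity position $t$ when the items of $F$ are sorted by decreasing density, treated as a continuum of ``fluid'' of total mass $s_e$ per item. Equivalently, $h(F)=\max\{\int_0^B \phi\}$ over density profiles, which is the integral of the top-$B$ mass of a multiset of density values. The top-$B$ sum of weighted values is a weighted rank-type function: it is the fractional analogue of ``sum of the $k$ largest numbers in a set,'' which is known to be submodular. To make this rigorous, we would discretize: scale sizes to integers (after perturbation and a limiting argument, or by working with rational data), split each item $e$ into $s_e$ unit copies of value $v_e/s_e$, and note that $h(F)$ becomes the sum of the $B$ largest values among the copies. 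We then verify directly that adding a copy $x$ to $F_1\subset F_2$ gains $\max(0,\,x-\text{$B$-th largest})$, and the $B$-th largest value is monotone in the set, which gives diminishing returns. The non-integral $B$ and the convention that an item of size exceeding $B$ is excluded both fit this framework once oversized items are pre-removed. This discretization and limit is the step I expect to require the most care, but it is routine.

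Finally, for the oracle, the formula above shows that computing $\gain(c,C'\mid A_{C'})$ requires two fractional knapsack LP values, $g(A_{C'})$ and $g(A_{C'}\cup\{c\})$. Each is computed in polynomial time by the greedy density sort (or any LP solver), together with the known $p_c$.
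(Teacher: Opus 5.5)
Your proposal is correct and follows the paper's proof. It uses the same independence computation giving $\gain(c, C' \mid A_{C'}) = p_c\bigl(\lp(G \mid A_{C'} \cup \{c\}, B) - \lp(G \mid A_{C'}, B)\bigr)$, the same reduction via $A_{C_1} \subset A_{C_2}$ to ordinary submodularity of $g(D) = \lp(G \mid D, B)$, and the same source for that submodularity, namely the greedy (sorted value-density) structure of the fractional knapsack LP.

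The only difference is granularity in the last step. The paper does a single exchange for the whole block $E(G_c)$ and works directly with real sizes: its $b$ units displace the least-dense $b$ units of the current solution, and $\ell_1 \le \ell_2$ because the sorted profile for $C_2$ dominates that for $C_1$. You instead split items into unit copies, use the marginal $\max(0, x - \theta_B)$ of a top-$B$ sum, and chain over copies and items. This makes the monotonicity explicit, at the cost of the rational-data or limiting step you flagged.
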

\begin{proof}
    Consider any subset $C' \subset C$ with partial realization $C'$ and any element $c \notin C'$. We can write $\gain(c, C', A_{C'})$ as
    \begin{align*}
        \gain(c, C', A_{C'}) &= \EE[f(C' \cup \{c\}, A) \mid A \cap C' = A_{C'}] - \EE[f(C', A) \mid A \cap C' = A_{C'}]\\
        &= \EE[ \lp (G \mid (C' \cup \{c\} ) \cap A, B) \mid A \cap C' = A_{C'}] - \EE[ \lp (G \mid C' \cap A, B) \mid A \cap C' = A_{C'}]\\
        &= \EE[ \lp (G \mid A_{C'} \cup (\{c\} \cap A), B) \mid A \cap C' = A_{C'}] - \EE[ \lp (G \mid A_{C'}, B) \mid A \cap C' = A_{C'}]\\
        &= \EE[ \lp (G \mid A_{C'} \cup (\{c\} \cap A), B) \mid A \cap C' = A_{C'}] - \lp (G \mid A_{C'}, B)
    \end{align*}
    Recall that our distribution over $A$ includes $c \in A$ independently for each $c \in C$ with probability $p_c$. In particular, the event $c \in A$ is independent of $A \cap C' = A_{C'}$, since $c \notin C'$. Thus, we can rewrite the first term as
    \[\EE[ \lp (G \mid A_{C'} \cup (\{c\} \cap A), B) \mid A \cap C' = A_{C'}] = p_c \cdot \lp (G \mid A_{C'} \cup \{c\}, B) + (1 - p_c) \cdot \lp (G \mid A_{C'}, B).\]
    Plugging this into $\gain$ gives our final expression
    \[\gain(c, C', A_{C'}) = p_c \cdot \left( \lp (G \mid A_{C'} \cup \{c\}, B) - \lp (G \mid A_{C'}, B) \right).\]
    To evaluate this expression, we need to solve two polynomially-sized LP's, so it is clear that $\gain(c, C', A_{C'})$ can be computed efficiently. It remains to show that $f$ is adaptive submodular. Again, using our expression for $\gain$, it suffices to show that the simpler function $g: 2^C \rightarrow \mathbb{R}_{\geq 0}$ defined by $g(C') = \lp(G \mid C', B)$ is submodular in the usual sense (for any sets $C_1 \subset C_2 \subset C$ and $c \notin C_2$, we have $g(C_1 \cup \{c\}) - g(C_1) \geq g(C_2 \cup \{c\}) - g(C_2)$. This is given in the following proposition, which completes the proof of the lemma.

    \begin{restatable}{prop}{propsubmod}\label{prop:submod}
        For any collection of stars $G$ with center set $C$ and budget $B \geq 0$, the set function $g(P) = \lp(G \mid P, B)$ for $P \subset C$ is submodular.
    \end{restatable}

    We prove the proposition formally in \Cref{sec: appendix_knapsack} using standard properties of the knapsack LP. Roughly speaking, one can view the marginal gain of adding $c$ to $C_1$ as adding some new items to the knapsack instance defined by $C_1$. These items may replace some of the items chosen by the optimal LP solution on $C_1$. Since $C_1 \subset C_2$, the items replaced by adding $c$ to $C_2$ are only better than those replaced in $C_1$, which gives the desired result.
\end{proof}

To summarize this section, we can reduce the problem of finding a good exploration set to an instance of \adaptiveknapsack with objective function $f(P,A) = \lp(G \mid P \cap A, B/2)$. In the next section, we give our complete algorithm for collections of stars

\subsection{Algorithm for collections of stars}

Morally, our algorithm to find a good \ee policy for collections of stars will use the algorithm guaranteed by \Cref{prop:adaptiveknapsack} for \adaptiveknapsack to find a good set of centers to explore. Then, conditioned on the outcome of the exploration phase, we run \knapsack to choose the exploitation set. However, there are four technical details which we have swept under the rug until now:

\begin{itemize}
	\item Our final goal is not to output an \ee policy, but rather a policy that does edge queries. It is straightforward to simulate an \ee policy with budget $B$ by typical edge queries (given information and reward) with the same budget -- if the \ee policy explores some vertex, then we can perform an edge query on the cheapest edge incident to that vertex and obtain only more information (and possibly some reward). In the exploitation phase, we can query the same edges as the \ee policy. Thus, it suffices for our algorithm to output an \ee policy.
	\item In two steps so far, we have lost a constant factor in the budget: when comparing to the optimal \ee policy rather than typical adaptive policy (\Cref{lem-ee-sim}) and when considering \ee policies that only explore centers (\Cref{lem-center-query}). To fix this, we will show that we can scale-down our budget by a constant factor at only a constant factor increase in our final approximation ratio. Here we use properties of the (conditional) knapsack LP.
	\item Although we are exploring to optimize $f(\cdot, A)$, note that $f(\cdot, A)$ only counts the reward from stars whose centers we have explored. In general, $\opteec$ can obtain reward from stars with unexplored centers as well. To fix this, we randomly choose the exploitation phase to either choose edges based conditioned on the exploitation set (to get reward from explored centers) or ignoring it (to get reward from unexplored centers).
    \item One limitation of \ee policies is that they can never obtain reward from \emph{large} edges (edges $e$ with size $s_e > B/2$). However, we will show that the adaptivity gap for the the subinstance of large edges is small, so we can simply run \knapsack if such edges contribute a large amount to the optimal reward.
\end{itemize} 

Taking the above points into account, our \ee policy, \staralg, is given below.

\medskip
\hrule
\smallskip
\hrule
\medskip
\staralg: Given input collection of stars $G$ and budget $B > 0$:
\begin{enumerate}
	\item \textbf{(Exploration)} Run the algorithm guaranteed by \Cref{prop:adaptiveknapsack} on function $f(P,A) = \lp(G \mid P \cap A, B/2)$ with budget $\frac{B}{2}$ to adaptive explore centers, say $Q \subset C$.
	\item \textbf{(Exploitation)} With probability $\frac{1}{2}$ each, run one of the two following algorithms:
    \begin{itemize}
        \item Exploit the edges output by $\knapsack(G \mid Q \cap A, B/2)$. where $A \subset C$ is the realized set of active centers.
        \item Exploit the edges output by $\knapsack(G, B/2)$.
    \end{itemize}
\end{enumerate}
\hrule
\smallskip
\hrule
\medskip

Our main guarantee for \staralg is given in the next theorem. Note that we are comparing against a stronger $\opt$ with budget $B' \geq B$. This will be useful for future budget-scaling arguments. Also, note that we restrict ourselves to instances with only small edges ($s_e \leq B/2$); we defer the large edge case to later.

\begin{theorem}\label{thm:starsmall}
    Let $G$ be a collection of stars and $B' \geq B > 0$ any budgets. If every edge $e \in E$ has size $s_e \leq B/2$, then $\staralg(G,B)$ outputs an \ee policy achieving expected reward $\EE[\staralg(G,B)] = \Omega((\frac{B}{B'})^2) \cdot \EE[\opt(G,B')]$
\end{theorem}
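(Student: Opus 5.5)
We plan to begin from \Cref{cor-star-opt}: $\EE[\opt(G,B')] \le \EE[\opteec(G,8B')]$. Fix the optimal center-only \ee policy $\pol^*=\opteec(G,8B')$, which explores centers $P^*$ with budget $4B'$ and then exploits a set $E^*$ of size at most $4B'$. We split its reward into two parts. The first is the reward $R_1$ from exploited edges whose center lies in $P^*$. The second is the reward $R_2$ from exploited edges whose center is unexplored. We then bound each part against one of the two branches of the exploitation phase of \staralg.

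\textbf{Unexplored centers.} For an edge $c\ell$ whose center $c\notin P^*$, the event that $c\ell$ is exploited depends only on the outcomes of explored centers, so it is independent of $X_cX_\ell$. Exactly as in the matching argument of \Cref{thm-nonadapt}, setting $x_e=\Pr(e\in E^*,\ c\notin P^*)$ gives a feasible solution to $\lp(G,4B')$, so $R_2\le \lp(G,4B')$. By \Cref{lem: knapsack_rescale} (all edges have size at most $B/2$), $\lp(G,B/2)\ge \frac{B}{8B'}\lp(G,4B')$. Then \Cref{prop_knapsack} shows the second branch earns at least $\frac12\lp(G,B/2)=\Omega(\frac{B}{B'})R_2$.

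\textbf{Explored centers.} Conditioned on the exploration outcome, the part of $E^*$ inside stars of $P^*\cap A$ is a non-adaptive choice whose rewards are $w_{c\ell}X_\ell$, and these are independent of the choice. So $\EE[R_1]\le \EE[\lp(G\mid P^*\cap A,4B')]$; edges of inactive centers contribute nothing. We now need to scale this to the budget used by \staralg. This is the main obstacle, because both the exploration budget and the LP budget must shrink by a factor $\Theta(B/B')$, which is where the square comes from.

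For the LP budget, apply \Cref{lem: knapsack_rescale} on $G\mid C'$ pointwise to get $\lp(G\mid C',B/2)\ge \frac{B}{8B'}\lp(G\mid C',4B')$. For the exploration budget, we use a standard random-subsampling argument. Consider a modified policy that simulates $\pol^*$'s exploration but actually explores each center only with probability about $\frac{B}{8B'}$, fake-exploring it otherwise. We also truncate the policy so it never exceeds budget $B/2$; by Markov's inequality the truncation loses only a constant factor in expectation. The set $\hat P$ actually explored is a subset of the $P^*$ that would have been explored by the simulated policy. Each center of $P^*$ lands in $\hat P$ independently with probability $q$, independently of its outcome.

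To compare values, we use that $g(C')=\lp(G\mid C',B/2)$ is monotone and submodular (\Cref{prop:submod}) and that $g(\emptyset)=0$. Under independent sampling at rate $q$, submodularity gives $\EE[g(\hat P\cap A)]\ge q\,g(P^*\cap A)$; this step uses submodularity in place of linearity. Hence the best adaptive exploration with budget $B/2$ achieves $\EE[f]\ge\Omega(\frac{B}{B'})\EE[\lp(G\mid P^*\cap A,B/2)]\ge\Omega((\frac{B}{B'})^2)\EE[R_1]$.

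\textbf{Combining.} \Cref{prop:adaptiveknapsack} together with \Cref{lem:adaptivesubmod} shows that the exploration $Q$ of \staralg attains a $\frac19$ fraction of this value. Running $\knapsack(G\mid Q\cap A,B/2)$ then yields, by \Cref{prop_knapsack} applied conditionally, at least $\frac12\lp(G\mid Q\cap A,B/2)$; the rewards here are indeed $w_{c\ell}X_\ell$ because the centers are known to be active. Averaging the two branches, each taken with probability $\frac12$, gives $\EE[\staralg(G,B)]=\Omega((\frac{B}{B'})^2)(\EE[R_1]+\EE[R_2])$, which is the claimed bound.
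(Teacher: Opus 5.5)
Apart from one step, your outline is the paper's proof. The split into explored and unexplored centers, the bound $R_2\le \lp(G,4B')$, the bound $\EE[R_1]\le \EE[\lp(G\mid P^*\cap A,4B')]$, and the LP rescaling are exactly \Cref{lem:unexplored} and the first half of \Cref{lem:explored}. The step that does not go through as written is shrinking the exploration budget from $4B'$ to $B/2$ by subsampling and truncating. Markov's inequality only bounds the probability that the subsampled exploration overflows $B/2$. It says nothing about how much value is lost on that event, and here the two can be strongly correlated: a center may cost as much as $B/2$, the entire exploration budget, while others cost arbitrarily little.

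Concretely, let a run of $\pol^*$ first explore $k$ centers of cost $\epsilon$ with negligible value, and then one center $c^\star$ of cost $B/2$ carrying almost all of $g(P^*\cap A)$. Under your scheme (stopping at overflow, or even skipping centers that do not fit), $c^\star$ is actually explored only if its own coin succeeds and none of the $k$ earlier coins did. That happens with probability $q(1-q)^k\approx q e^{-qk}$, which tends to $0$ as $\epsilon\to 0$ with $k\epsilon$ fixed. So the truncated policy earns $o(q)\cdot g(P^*\cap A)$, not $\Omega(q)\cdot g(P^*\cap A)$. Your inequality $\EE[g(\hat P\cap A)]\ge q\,g(P^*\cap A)$ is correct for independent sampling. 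After truncation, however, membership is neither independent nor of probability $\Omega(q)$ for every center, so it no longer applies.

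The paper avoids this by a different route. It invokes the adaptivity gap of $3$ (\Cref{prop:knapsack_gap}, via \cite{DBLP:conf/soda/GuptaNS16}) to replace the adaptive $Q^*$ by a non-adaptive set $Q'$ of cost at most $4B'$. It then partitions $Q'$ into $O(B'/B)$ groups of cost in $[B/4,B/2]$, which is possible since every center costs at most $B/2$. Finally, subadditivity of $g$ picks a single group that is feasible for budget $B/2$.

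Your idea can be repaired by treating cheap centers ($s_c\le B/4$) and expensive ones ($s_c>B/4$) separately.
\begin{itemize}
\item For cheap centers, take $q=B/(32B')$. Conditioned on the simulated trajectory, the expected cost of earlier actually-explored centers is at most $B/8$, so each cheap center of $P^*$ survives truncation with probability at least $q/2$. The chain bound $\EE[g(\hat P\cap A)]\ge\sum_i\Pr(c_i\in\hat P)\,\Delta_i$, where $\Delta_i$ is the marginal value of $c_i$ over the earlier active centers of $P^*$, needs only monotonicity and submodularity, not independence. It then gives $\Omega(B/B')$ times the cheap part.
\item For expensive centers, note that $\pol^*$ explores at most $16B'/B$ of them. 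Actually exploring only the $j$-th one, for a uniformly random $j$, handles the expensive part.
\end{itemize}
Subadditivity of $g$ combines the two parts. Repaired this way, your route avoids the external adaptivity-gap theorem, while the paper's handles the problem of an item as large as the budget by bin-packing a non-adaptive set.
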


\subsection{Analysis of \staralg: Proof of \Cref{thm:starsmall}}

We begin by passing from standard adaptive policies ($\opt$) to \ee policies that only explore centers of stars ($\opteec$). By \Cref{cor-star-opt}, we have
\[\EE[\opteec(G, 8B')] = \Omega(\EE[\opt(G,B')]).\]
We further partition the reward obtain by $\opteec$ into the reward obtain from exploiting edges whose center we previously explored or not, which we denote by $\opt^1$ and $\opt^0$, respectively:
\[\opteec(G, 8B') = \opt^1 + \opt^0.\]
On the other hand, we can write the expected reward of our algorithm as
\[\EE[\staralg(G,B)] = \frac{1}{2} \cdot \EE[\knapsack(G \mid Q \cap A, B/2)] + \frac{1}{2} \cdot \EE[\knapsack(G, B/2)].\]
To complete the proof, we bound $\EE[\knapsack(\cup_{q \in Q \cap A} G_q, B/2)]$ with respect to $\EE[\opt^1]$ and $\EE[\knapsack(G, B/2)]$ with respect to $\EE[\opt^0]$ in order.

The former bound is more involved; we relate $\opt^1$ to a solution of the conditional LP: $\lp(G \mid Q^* \cap A, 4B')$, where $Q^* \subset C$ is the subset of explored centers chosen by $\opteec(G, 8B')$. Then we compare our algorithm's choices in both the exploration and exploitation phases to those of $\opteec(G, 8B')$ in terms of appropriate \adaptiveknapsack instances.

\begin{lem}\label{lem:explored}
    We have $\EE[\knapsack(G \mid Q \cap A, B/2)] = \Omega((\frac{B}{B'})^2) \cdot \EE[\opt^1]$.
\end{lem}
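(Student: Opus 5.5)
Our plan is to route $\opt^1$ through the conditional knapsack LP and then through the \adaptiveknapsack guarantee. Let $Q^*\subset C$ be the (random, adaptively chosen) set of centers explored by $\opteec(G,8B')$. Its exploration cost is at most $4B'$, and its exploitation set $P^*$ has size at most $4B'$.

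\textbf{Step 1: bound $\opt^1$ by a conditional LP.} We claim $\EE[\opt^1]\le \EE[\lp(G\mid Q^*\cap A,4B')]$. Condition on the outcome of the exploration phase, i.e.\ on $Q^*$ and $Q^*\cap A$. The edges counted in $\opt^1$ are edges $c\ell\in P^*$ with $c\in Q^*$. Only those with $c\in Q^*\cap A$ contribute reward, and for them $R_{c\ell}=w_{c\ell}X_\ell$. Since only centers are explored, the leaves are still unrevealed and independent of the exploration transcript, and $P^*$ is chosen non-adaptively given that transcript. Hence, conditionally, $P^*$ is independent of the $X_\ell$'s. Setting $x_{c\ell}=\Pr(c\ell\in P^*\mid \text{transcript})$ for $c\in Q^*\cap A$ gives a feasible point of $\lp(G\mid Q^*\cap A,4B')$; the constraint $x_e=0$ when $s_e>4B'$ is vacuous because $s_e\le B/2$. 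Its objective equals the conditional expectation of $\opt^1$, and taking expectations proves the claim.

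\textbf{Step 2: rescale the budgets.} Apply \Cref{lem: knapsack_rescale} to each conditional instance, which is valid since all sizes are at most $B/2$. This gives
\[\lp(G\mid Q^*\cap A,B/2)\ \ge\ \tfrac{B}{8B'}\,\lp(G\mid Q^*\cap A,4B').\]
The exploration policy $Q^*$ has cost $4B'$, but our exploration budget is only $B/2$. To handle this, we show that the adaptive optimum of \adaptiveknapsack with budget $B/2$ captures an $\Omega(B/B')$ fraction of the value of the best policy with budget $4B'$, both measured with $f(P,A)=\lp(G\mid P\cap A,B/2)$.

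\textbf{Step 3 (main obstacle): shrink the exploration policy.} We intend to use the following construction; we have not checked it in detail. Simulate the policy $Q^*$ and chop its probe sequence into consecutive segments of cost roughly $B/2$, giving $O(B'/B)$ segments. Each probe costs $s_c=\min_\ell s_{c\ell}\le B/2$, so each probe fits inside a single segment. By adaptive submodularity (\Cref{lem:adaptivesubmod}), the total expected value is at most the sum of the expected marginal gains of the segments. For a segment starting after history $h$, its gain is at most what a policy that fake-explores the earlier centers (sampling them from their priors, as in \Cref{lem-center-query}) and truly explores only that segment would get. That policy has cost $B/2$ and its value is at least the segment's gain, again by adaptive submodularity. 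Averaging over the segments, some segment policy achieves an $\Omega(B/B')$ fraction. If this decomposition fails to go through, we would instead use a random-truncation argument directly on the LP value.

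\textbf{Step 4: combine.} By \Cref{prop:adaptiveknapsack}, the exploration set $Q$ satisfies
\[\EE[\lp(G\mid Q\cap A,B/2)]\ \ge\ \tfrac19\,\EE\big[\lp(G\mid Q'\cap A,B/2)\big]\]
for any feasible adaptive policy $Q'$, and in particular for the good segment policy from Step 3. Then \Cref{prop_knapsack}, applied to $G\mid Q\cap A$ with the conditionally independent leaf rewards, gives $\EE[\knapsack(G\mid Q\cap A,B/2)]\ge \tfrac12\EE[\lp(G\mid Q\cap A,B/2)]$. Chaining Steps 1–4 yields the claimed bound $\Omega((B/B')^2)\,\EE[\opt^1]$: one factor $B/B'$ comes from the exploration budget in Step 3, and the other from the exploitation budget in Step 2.
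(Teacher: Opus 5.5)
Your proposal is correct. Steps 1, 2 and 4 are the paper's argument; conditioning on the full exploration transcript rather than only on $Q^*\cap A$ is harmless. The genuine difference is Step 3.

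\textbf{The paper's route.} The paper never shrinks the adaptive exploration policy directly. It invokes the adaptivity-gap bound of \Cref{prop:knapsack_gap}, imported from stochastic probing with monotone submodular objectives, to replace $Q^*$ by an optimal \emph{non-adaptive} set $Q'$ for $\mathcal{I}_{4B'}$, losing a factor $3$. It then splits the fixed set $Q'$ into $O(B'/B)$ chunks of cost at most $B/2$ and averages, using subadditivity of $g(P)=\lp(G\mid P,B/2)$.

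\textbf{Your route.} You instead cut every root-to-leaf path of the adaptive exploration tree into segments of cost at most $B/2$. This does go through, but both of your appeals to adaptive submodularity can be replaced by something simpler.
\begin{itemize}
\item Your first inequality is really an equality: the telescoping identity $\EE[g(Q^*\cap A)]=\sum_{j\le J}\EE[g(T_j\cap A)-g(T_{j-1}\cap A)]$. Here $T_j$ is the set probed through segment $j$, and $J=O(B'/B)$ is a deterministic bound on the number of segments. The bound holds if you close a segment whenever the next probe would push it past $B/2$, since then any two consecutive segments cost more than $B/2$ in total.
\item For your second claim, couple the fake outcomes of the segment policy $\sigma_j$ with the true outcomes on $T_{j-1}$, exactly as in \Cref{lem-center-query}. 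Then $\sigma_j$ truly explores precisely $S_j=T_j\setminus T_{j-1}$, with the same outcomes, so its value is $\EE[g(S_j\cap A)]$. Submodularity with $g(\emptyset)=0$ gives, pointwise, $g(S_j\cap A)\ge g(T_j\cap A)-g(T_{j-1}\cap A)$.
\end{itemize}
Averaging over $j$ yields a feasible budget-$B/2$ adaptive policy capturing an $\Omega(B/B')$ fraction of $\EE[f(Q^*,A)]$, which is all Step 4 needs. Your fallback plan is therefore unnecessary.

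\textbf{What each approach buys.} Your argument is self-contained. It uses only subadditivity of $g$ and independence of the centers, and it avoids both the external adaptivity-gap theorem of \cite{DBLP:conf/soda/GuptaNS16} and its factor $3$. The paper's route gets to chunk a static set, which is trivial to partition, at the price of importing that result.
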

\begin{proof}
    As in the discussion above, we let $Q^*$ be the exploration set of total size at most $4B'$, which is adaptively chosen by $\opteec(G, 8B')$. We first bound the $\opt^1$ in terms of the conditional LP. Let $E^1 \subset E$ be the subset of edges exploited by $\opteec(G, 8B')$ whose center was previously explored. We may assume that $E_1 \subset \bigcup_{c \in Q^* \cap A} E(G_c)$ -- i.e. $E^1$ only contains edges incident on stars with active centers, since all edges incident to inactive centers give zero reward. For convenience, we define $G(Q^* \cap A) = \bigcup_{c \in Q^* \cap A} G_c$ to be the random subgraph of such stars with explored and active centers. Then we can write $\EE[\opt^1]$ as
    \begin{align*}
        \EE[\opt^1] &= \EE \big[ \sum_{e \in E^1} R_e \big]\\
        &= \EE[\sum_{e \in E} R_e \cdot \ind_{e \in E^1} \cdot \ind_{e \in G(Q^* \cap A)}]\\
        &= \EE[\sum_{e \in E} \EE[R_e \cdot \ind_{e \in E^1} \cdot \ind_{e \in G(Q^* \cap A)} \mid Q^* \cap A]]\\
        &= \EE[\sum_{e \in E} \ind_{e \in G(Q^* \cap A)} \cdot \EE[R_e \cdot \ind_{e \in E^1} \mid Q^* \cap A]]\\
        &= \EE[\sum_{e \in G(Q^* \cap A)} \EE[R_e \cdot \ind_{e \in E^1} \mid Q^* \cap A]] 
    \end{align*}
    Now we observe that on the event $c\ell \in E^1$, we already know that $c$ is active, so $R_{c\ell} = w_{c\ell} \cdot X_c X_\ell = w_{c\ell} X_\ell$. It follows, for any $c\ell \in G(Q^* \cap A)$, we have
    \[\EE[R_{c\ell} \cdot \ind_{c\ell \in E^1} \mid Q^* \cap A] = \EE[w_{c\ell} X_{\ell} \cdot \ind_{c\ell \in E^1} \mid Q^* \cap A] = \EE[w_{c\ell} X_{\ell}] \cdot \Pr(e \in E^1 \mid Q^* \cap A).\]
    Plugging this expression into the above gives
    \[\EE[\opt^1] = \EE[\sum_{e \in G(Q^* \cap A)} \EE[w_{c\ell} X_{\ell}] \cdot \Pr(e \in E^1 \mid Q^* \cap A)].\]
    Now consider the random LP solution $x_e = \Pr(e \in E^1 \mid Q^* \cap A)$ for all $e \in G(Q^* \cap A)$. Since $E^1$ has total size at most $4B'$ almost surely, this setting of the $x$-variables is feasible for $\lp(G \mid Q^* \cap A, 4B')$, and it achieves objective value exactly $\sum_{e \in G(Q^* \cap A)} \EE[w_{c\ell} X_{\ell}] \cdot \Pr(e \in E^1 \mid Q^* \cap A)$. We conclude that
    \[\EE[\opt^1] \leq \EE[\lp(G \mid Q^* \cap A, 4B')].\]
    Since every edge $e \in E(G)$ has size $s_e \leq B/2$, we can apply \Cref{lem: knapsack_rescale} to obtain
    \[\EE[\opt^1] = O(\frac{B'}{B}) \cdot \EE[\lp(G \mid Q^* \cap A, B/2)].\]
    On the other hand, the quantity we are comparing to is
    \[\EE[\knapsack(G \mid Q \cap A, B/2)] = \Omega(\EE[\lp(G \mid Q \cap A, B/2)]).\]
    To complete the proof of the lemma, it suffices to show $\EE[\lp(G \mid Q^* \cap A, B/2)] = O(\frac{B'}{B}) \cdot \EE[\lp(G \mid Q \cap A, B/2)]$. The main difference is that in the former expression, $\opteec(B, 8B')$ adaptively chooses $Q^*$ of total size at most $4B'$, where our algorithm chooses $Q$ of total size at most $B/2$. To do so, we leverage the submodularity of the function $\lp(G \mid \cdot, B/2)$, which was shown in \Cref{prop:submod}.

    Note that both sets $Q$ and $Q^*$ are chosen adaptively, but it will be convenient to reason about non-adaptive exploration sets instead. The following proposition bounds that \emph{adaptivity gap} for our particular instances of \adaptiveknapsack. This follows from known adaptivity gaps for certain stochastic probing problems \cite{DBLP:conf/soda/GuptaNS16}. Just like for \bayes, we define the \emph{adaptivity gap} for \adaptiveknapsack by $\max_{\mathcal{I}} \frac{\EE[\opt_A(\mathcal{I})]}{\EE[\opt_{NA}(\mathcal{I})]}$, where $\opt_A$ and $\opt_{NA}$ are the reward obtained by the optimal adaptive and non-adaptive policies, respectively. See \Cref{sec: appendix_knapsack} for proof of the next proposition, which reduces \adaptiveknapsack to an appropriate submodular probing problem and applies a known adaptivity gap result.

    \begin{restatable}{prop}{knapsackgap}\label{prop:knapsack_gap}
        Let $G$ be any collection of stars and $B > 0$ any budget. Then consider any \adaptiveknapsack instance with any budget $B' > 0$ on function $f:2^C \times 2^C \rightarrow \mathbb{R}_{\geq 0}$ defined by $f(P,A) = \lp(G \mid P \cap A, B)$ with respect to distribution $\mathcal{A}$ on $2^C$ such that for $A \sim \mathcal{A}$, each $c \in C$ is included in $A$ independently with probability $p_c \in [0,1]$. The adaptivity gap for such instances is at most $3$.
    \end{restatable}
    
    To apply \Cref{prop:knapsack_gap}, consider the following two \adaptiveknapsack instances, both on the function $f(P,A) = \lp(G \mid P \cap A , B/2)$ with underlying active item distribution defined by the probabilities $p_c$ for all $c \in C$. The instance $\mathcal{I}_{B/2}$ has budget $B/2$ and the instance $\mathcal{I}_{4B'}$ has budget $4B'$.

    We recall that our algorithm ran the $9$-approximation for \adaptiveknapsack guaranteed by \Cref{prop:adaptiveknapsack} to obtain $Q$. Thus, we have $\EE[f(Q,A)] \geq \frac{1}{9} \cdot \EE[\opt(\mathcal{I}_{B/2})]$.

    On the other hand, $Q^*$ is a feasible solution to $\mathcal{I}_{4B'}$, so $\EE[f(Q^*,A)] \leq \EE[\opt(\mathcal{I}_{4B'})] \leq 3 \cdot \EE[\opt_{NA}(\mathcal{I}_{4B'})]$, where in the final step we use the adaptivity gap (\Cref{prop:knapsack_gap}). Let $Q' \subset C$ be the optimal \emph{non-adaptive} solution to $\mathcal{I}_{4B'}$. Since every edge of $G$ has size at most $B/2$, we can partition $Q' = \bigcup_k Q'_k$ into $O(\frac{B'}{B})$-many subsets, $Q'_k$, each of which having total size in $[B/4, B/2]$. Let $g(P) = \lp(G \mid P, B/2)$. Then using the submodularity of $g$ (\Cref{prop:submod}), we have
    \[\EE[\opt_{NA}(\mathcal{I}_{4B'})] = \EE[g(Q' \cap A)] \leq \sum_k \EE[g(Q'_k 
    \cap A)].\]
    In particular, by averaging, there exists some index $k'$ with $\EE[g(Q'_{k'} 
    \cap A)] = \Omega(\frac{B}{B'}) \cdot \EE[\opt_{NA}(\mathcal{I}_{4B'})]$. Since $Q'_{k'}$ has total size at most $B/2$, it is feasible for $\mathcal{I}_{B/2}$, so we conclude that
    \[\EE[\lp(G \mid Q \cap A, B/2)] = \Omega(\EE[\opt(\mathcal{I}_{B/2})]) = \Omega(\EE[\lp(G \mid Q'_{k'} \cap A, B/2]) = \Omega(\frac{B}{B'}) \cdot \EE[\lp(G \mid Q^* \cap A, B/2)],\]
    as required.
\end{proof}

The latter bound is the simpler one. Intuitively, $\opt^0$ has no advantage over a non-adaptive algorithm, since it exploits edges without any information about either endpoint, so we can relate $\opt^0$ to the standard knapsack $\lp(G, 4B')$.

\begin{lem}\label{lem:unexplored}
    We have $\EE[\knapsack(G, B/2)] = \Omega(\frac{B}{B'}) \cdot \EE[\opt^0]$.
\end{lem}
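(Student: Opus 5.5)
The plan is to show that $\opt^0$ is dominated by the plain knapsack LP on $G$ with budget $4B'$, and then rescale the budget down to $B/2$.

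\textbf{Step 1: $\EE[\opt^0] \le \lp(G, 4B')$.} Let $E^0$ be the set of edges exploited by $\opteec(G,8B')$ whose center was \emph{not} explored. Since $\opteec$ explores only centers, and $G$ is a collection of stars, each $c\ell \in E^0$ has neither endpoint explored. The decision $\{c\ell \in E^0\}$ depends only on the outcomes of the explored centers and the policy's internal randomness, so it is independent of $(X_c, X_\ell)$. Hence
\[\EE[R_{c\ell}\ind_{c\ell\in E^0}] = \EE[R_{c\ell}]\cdot\Pr(c\ell\in E^0).\]
Set $x_e = \Pr(e \in E^0)$. The exploitation set has total size at most $4B'$ almost surely, so $x$ is feasible for $\lp(G,4B')$. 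The constraint $x_e = 0$ whenever $s_e > 4B'$ holds trivially, since such edges are never exploited. The objective value of $x$ is exactly $\EE[\opt^0]$.

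\textbf{Step 2: rescale the budget.} Every edge has size $s_e \le B/2 \le 4B'$, so \Cref{lem: knapsack_rescale} applies with budgets $B/2 \le 4B'$. It gives
\[\lp(G,B/2) \ge \frac{B/2}{4B'}\,\lp(G,4B') = \frac{B}{8B'}\,\lp(G,4B').\]

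\textbf{Step 3: conclude.} By \Cref{prop_knapsack},
\[\EE[\knapsack(G,B/2)] \ge \tfrac12\lp(G,B/2) \ge \frac{B}{16B'}\,\EE[\opt^0],\]
which is the claimed bound.

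The only subtle step is the independence claim in Step 1. It needs care because exploitation happens after all exploration, so the conditioning is on the full exploration transcript. The resolution is that the transcript involves only center variables of other stars and the edge's own center is unexplored by definition of $E^0$. Vertices are independent, and the policy's exploitation choice is a function of that transcript, so the claim holds.
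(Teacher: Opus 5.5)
Your proof is correct and follows essentially the same route as the paper. You bound $\EE[\opt^0]$ by $\lp(G,4B')$ using the marginals $x_e=\Pr(e\in E^0)$ and the independence of the exploitation decision from $R_e$ (neither endpoint is explored), then rescale the budget via \Cref{lem: knapsack_rescale} and apply \Cref{prop_knapsack}, arriving at the correct explicit constant $\frac{B}{16B'}$.
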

\begin{proof}
    Let $E^0 \subset E$ be the subset of edges exploited by $\opteec(G, 8B')$ whose center was not previously explored (i.e. $E^0$ is exactly the subset of edges contributing to $\opt^0$. Then we can write $\EE[\opt^0]$ as
    \[\EE[\opt^0] = \EE \big[ \sum_{e \in E^0} R_e \big] = \sum_{e \in E} \EE[R_e \cdot \ind_{e \in E^0}] = \sum_{e \in E} \EE[R_e] \cdot \Pr(e \in E_0),\]
    where we use the fact that for any edge $e \in E$, the random variable $R_e$ is independent of the event $e \in E_0$, since we do not explore either endpoint of $e$ before deciding to exploit $e$. Thus, the marginals $x_e = \Pr(e \in E)$ for all $e \in E$ give a feasible solution to $\lp(G, 4B')$ since we have budget $4B'$ to exploit edges. This solution achieves objective value exactly $\sum_{e \in E} \EE[R_e] \cdot \Pr(e \in E_0) = \EE[\opt^0]$. We combine the guarantee for \knapsack, \Cref{prop_knapsack}, with the budget-scaling for the knapsack LP, \Cref{lem: knapsack_rescale}, to complete the proof:
    \[\EE[\knapsack(G, B/2)] = \Omega(\lp(G,B)) = \Omega(\frac{B}{B'}) \cdot \lp(G, 4B') =  \Omega(\frac{B}{B'}) \cdot \EE[\opt^0].\]
\end{proof}

We combine the above two lemmas to complete the proof of \Cref{thm:starsmall}.
\begin{align*}
    \EE[\staralg(G,B)] &= \frac{1}{2} \cdot \EE[\knapsack(G \mid Q \cap A, B/2)] + \frac{1}{2} \cdot \EE[\knapsack(G, B/2)]\\
    &= \Omega((\frac{B}{B'})^2) \cdot \EE[\opt^1] + \Omega(\frac{B}{B'}) \cdot \EE[\opt^0]\\
    &= \Omega((\frac{B}{B'})^2) \cdot \EE[\opteec(G, 8B')] = \Omega((\frac{B}{B'})^2) \cdot \EE[\opt(G, B')],
\end{align*}
as required.

\subsection{Algorithm for large edges}

The limitation of \Cref{thm:starsmall} is that we need all edges to be small (have size at most half the budget). To remedy this, in this section we show that our original non-adaptive algorithm, \knapsack, has good performance when all edges are large. Intuitively, this is because in this case, $\opt$ probes only few edges. Thus just non-adaptively taking the single best edge is a good algorithm.

\begin{lem}\label{lem:starbig}
    Let $G$ be a collection of stars and $B' \geq B > 0$ any budgets. If every edge $e \in E$ has size $s_e \in (B/2, B]$, then we have $\EE[\knapsack(G,B)] = \Omega((\frac{B}{B'})^2) \cdot \EE[\opt(G,B')]$.
\end{lem}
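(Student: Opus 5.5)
Since every edge has size $s_e > B/2 \geq B'/(2B'/B)$, any policy with budget $B'$ probes fewer than $2B'/B$ edges on every realization. The plan is to show that $\opt(G,B')$ gets at most $O(B'/B)$ times the reward of the single best edge, up to information effects. We then control those information effects using the star structure, at a cost of one more factor of $B'/B$. Finally, \knapsack always has the option of probing the single edge maximizing $\EE[R_e]$; this edge is feasible because $s_e \le B$, so $\EE[\knapsack(G,B)] \ge \max_e \EE[R_e]$.

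\textbf{Step 1: reduce to bounding the reward of a single probe.} Write $P$ for the adaptively probed set of $\opt(G,B')$ and $k = \lfloor 2B'/B \rfloor$, so $|P| \le k$ always. Decompose $\opt = \sum_{j=1}^{k} R_{e_j}$, where $e_j$ is the $j$-th probed edge. It suffices to show $\EE[R_{e_j}] = O(B'/B)\cdot \max_e \EE[R_e]$ for each $j$; summing over $j$ then gives the $(B/B')^2$ factor.

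\textbf{Step 2: bound the reward of the $j$-th probe.} This is the main obstacle, because the $j$-th probe is chosen with information, as in the star examples. I would use the structure of stars. When $e_j = c\ell$ is chosen, the only information that correlates with $R_{c\ell}$ is the state of $c$. This information comes from an earlier probe in the same star $G_c$, because leaves have degree one and a leaf is revealed only by probing its own edge.

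\emph{Case 1: no edge of $G_c$ was probed before.} Then $R_{e_j}$ is independent of the decision to probe $e_j$. Hence $\EE[R_{e_j}] = \sum_e \EE[R_e]\Pr(e_j=e) \le \max_e \EE[R_e]$.

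\emph{Case 2: some earlier probe $e_i$ with $i<j$ lies in $G_c$.} Let $c\ell'$ be the first probed edge of $G_c$, and charge $R_{e_j}$ to this first probe. Given that $c$ is active, $R_{c\ell} \le w_{c\ell}X_\ell$, and the decision to probe $c\ell$ is independent of $X_\ell$. So the expected charged reward is at most $\EE\bigl[\ind(c\ell' \text{ first probed in } G_c)\, X_c \max_{\ell} w_{c\ell}p_\ell\bigr]$, where the maximum is over leaves of $G_c$ at each of the at most $k$ later steps. The first probe in $G_c$ is itself chosen with $X_c$ unknown, so $X_c$ is independent of that decision and contributes a factor $p_c$. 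This yields a bound of $p_c \max_\ell w_{c\ell}p_\ell = \max_\ell \EE[R_{c\ell}] \le \max_e \EE[R_e]$ per later probe.

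\textbf{Step 3: sum the charges.} Each of the at most $k$ probes either falls in Case 1 and contributes at most $\max_e\EE[R_e]$, or is charged to an earlier ``first probe in a star''. A fixed first probe receives at most $k$ charges. The number of first probes is at most $k$, and each such charging event is bounded in expectation by $\max_e \EE[R_e]$ via the independence argument above. Summing gives $\EE[\opt(G,B')] \le O(k^2)\max_e\EE[R_e] = O((B'/B)^2)\,\EE[\knapsack(G,B)]$, which is the claimed bound. The delicate point is making the per-star independence of $X_c$ from the first-probe decision rigorous. I would handle it by conditioning on the history just before that first probe and using that the vertices are independent.
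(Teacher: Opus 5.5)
Your argument is correct, and it takes a genuinely different route from the paper's. The paper never analyzes the probe sequence of $\opt(G,B')$ directly. It first passes to a center-only explore--exploit policy $\opteec(G,8B')$ via \Cref{cor-star-opt}. It then replaces each star by its single best edge to get a matching $M$, losing one factor of $B'/B$ because only $O(B'/B)$ large edges per star can be exploited. Next it uses the fact that on a matching the knapsack LP upper-bounds the adaptive optimum. Finally it rescales the LP budget from $16B'$ down to $B$ via \Cref{lem: knapsack_rescale}, losing the second factor of $B'/B$.

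You instead charge directly over the at most $k=\lfloor 2B'/B\rfloor$ probes. A probe into an untouched star is uninformed, so it earns at most $\max_e\EE[R_e]$. A probe into a touched star is charged to that star's first probe. Independence of $X_\ell$ from the decision to probe $c\ell$, and of $X_c$ from the decision to make the first probe into $G_c$, gives the factor $p_c\max_\ell w_{c\ell}p_\ell=\max_\ell\EE[R_{c\ell}]$. So your two factors of $B'/B$ come from the number of steps and from the number of earlier first probes.

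To make Step~3 airtight, index the charges by deterministic step pairs $i<j$ rather than by the random first probes. For fixed $i<j$, consider the expected reward of $e_j$ on the event that $e_i$ is the first probe of the star containing $e_j$. It is at most $\sum_c \max_\ell\EE[R_{c\ell}]\,\Pr(e_i \text{ is the first probe of } G_c)\le \max_e\EE[R_e]$, because these events are disjoint over $c$. Summing gives $\EE[\opt(G,B')]\le\binom{k+1}{2}\max_e\EE[R_e]$.

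Your route is self-contained and has explicit constants. It compares directly against the single-best-edge branch of \knapsack. It therefore needs neither the explore--exploit simulation with its budget blow-ups of $8$ and $16$, nor the matching LP relaxation, nor \Cref{prop_knapsack}. The paper's route is more modular: it reuses the structural lemmas that drive the rest of \Cref{sec-adapt}. It also makes explicit that stars with only large edges essentially reduce to a matching.
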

\begin{proof}
    We first relate $\opt$ with an \ee policy that only explores centers of stars. By \Cref{cor-star-opt}, we have $\EE[\opt(G,B')] \leq \EE[\opteec(G, 8B')]$. Now consider the following subgraph $M \subset G$: For every star $G_c$ of $G$, we keep only the single edge with largest expected reward, i.e. the edge, say $e^*_c$ achieving $\max_{c\ell \in E(G_c)} \EE[R_{c\ell}]$ for all $c \in C$. Note that $M$ is a matching. We bound the loss in expected reward by restricting $\opteec$ to $M$ in the next proposition.

    \begin{prop}\label{prop:starsmallsub}
        We have $\EE[\opteec(M,16B')] = \Omega(\frac{B}{B'}) \cdot \EE[\opteec(G,8B')].$
    \end{prop}
    \begin{proof}
        Note that $M$ is also a collection of stars with the same center set as $G$. Further, since every edge in $G$ (and thus $M$) has size in $(B/2, B]$, the cost of exploring a vertex in $M$ and $G$ is also always in $(B/2, B]$. With these observations in mind, we consider the following \ee policy $\pol$ on $M$: In the exploration phase, $\pol$ adaptively explores exactly the same centers as $G$. We couple the outcomes of both policies' exploration phase such that they explore the same centers and observe the same outcomes. Then in the exploitation phase, for each star in $G$ such that $\opteec(G,8B')$ exploits any edge in this star, our policy $\pol$ will exploit the unique edge incident to that star in $M$ (i.e. the edge with largest expected reward on that star from $G$). This completes the description of $M$.

        Note that $M$ uses budget at most $8B'$ in each phase since in both graphs, the cost of exploring/exploiting a center/edge is always in $(B/2, B]$, so $\EE[\opteec(M,16B')] \geq \EE[\pol]$. It remains to show that $\EE[\pol] = \Omega(\frac{B}{B'}) \cdot \EE[\opteec(G,8B')]$. To see this, we condition on the outcome of the exploration phase (which is the same for both policies by our coupling). Conditioned on the exploration phase, we compare the expected conditional reward obtained by $\pol$ and $\opteec(G,8B')$ per star.

        Consider any star with center $c$. Let $E_c \subset E(G_c)$ be the subset of edges exploited by $\opteec(G,8B')$ in $G_c$. There are three cases to consider:
        \begin{itemize}
            \item If $c$ is unexplored, then the expected conditional reward of $\opteec(G,8B')$ from $G_c$ is exactly $\sum_{e \in E(G_c)} \EE[R_e]$. On the other hand, $\pol$ exploits $e^*_c \in E(G_c)$ achieving expected conditional reward $\EE[R_{e^*_c}] = \Omega(\frac{B}{B'}) \cdot \sum_{e \in E(G_c)} \EE[R_e]$. This is because every edge has size at least $B/2$, $\opteec(G, 8B')$ can exploit at most $O(\frac{B'}{B})$ edges per star, and $e^*_c$ maximizes the expected reward among edges in this star.
            \item Analogously, if $c$ is explored and active, then the expected conditional reward of $\opteec(G,8B')$ from $G_c$ is exactly $\sum_{e \in E(G_c)} \EE[R_e \mid X_c = 1] = \sum_{e \in E(G_c)} \EE[w_{c\ell} X_\ell]$. Recall by definition that $e^*_c$ is chosen to maximize $\EE[R_{c\ell}] = \EE[w_{c\ell} \cdot X_c X_\ell] = w_{c\ell} \cdot p_c p_\ell$ among edges with the same center $c$. This also implies that $e^*_c$ maximizes $\EE[R_{c\ell} \mid X_c = 1] = w_{c\ell} p_\ell$ among such edges, since they all share the same center. Thus, by an analogous argument as above, we have $\EE[R_{e^*_c} \mid X_c = 1] = \Omega(\frac{B}{B'}) \cdot \sum_{e \in E(G_c)} \EE[R_e \mid X_c = 1]$.
            \item Finally, if $c$ is explored and inactive, then both policies receive zero reward from this star.
        \end{itemize}
        To conclude, for each star, the expected conditional reward received by $\pol$ from this star is at least a $\Omega(\frac{B}{B'})$-factor of the expected conditional reward received by $\opteec(G, 8B')$. Summing over all stars completes the proof.
    \end{proof}
    Using the above proposition and recalling that \ee policies are only weaker than typical adaptive policies, we have
    \[\EE[\opt(G,B')] \leq \EE[\opteec(G, 8B')] = O(\frac{B'}{B}) \cdot \EE[\opteec(M,16B')] = O(\frac{B'}{B}) \cdot \EE[\opt(M,16B')].\]
    Now observe that $M$ is a matching and every edge has size in $(B/2, B]$, so in particular $M$ has max degree $1$ and the ratio of maximum to minimum edge size is $O(1)$. Thus, by \Cref{thm-nonadapt}, $\knapsack(M, 16B')$ outputs a non-adaptive policy that is $O(1)$-approximate. We complete the proof by reasoning about \knapsack via its LP guarantee (\Cref{prop_knapsack}) and budget-scaling property (\Cref{lem: knapsack_rescale}):
    \begin{align*}
        \EE[\opt(M,16B')] &= O(\EE[\knapsack(M,16B')]) = O(\lp(M, 16B'))\\
        &= O(\frac{B'}{B}) \cdot \lp(M, B) = O(\frac{B'}{B}) \cdot \EE[\knapsack(M,B)].
    \end{align*}
    To conclude, we obtain the final bound
    \[\EE[\opt(G,B')] = O(\frac{B'}{B}) \cdot \EE[\opt(M,16B')] = O((\frac{B'}{B})^2) \cdot \EE[\knapsack(M,B)],\]
    as required.
\end{proof}

To summarize, we gave two different algorithms for graphs which are collections of stars. If all edges have size $s_e \leq B/2$, then we designed an \ee policy, \staralg, which uses adaptive submodular maximization and the conditional knapsack LP (\ref{lp-cond}); see \Cref{thm:starsmall}. Otherwise, if all edges have size $s_e \in (B/2, B]$, then our standard non-adaptive algorithm, using the standard knapsack LP (\ref{lp_knapsack}), is already a good algorithm (\Cref{lem:starbig}). In the next section, we give our final algorithm for general graphs by decomposing the input graph into collections of stars with either small or large edges.

\subsection{Putting it all together: Algorithm for general graphs}

Finally, we are ready to prove the main result of the paper, \Cref{thm-and-adap}, which we recall here.

\thmandadap*

Our approach to proving \Cref{thm-and-adap} is natural: Using the decomposition lemma, \Cref{lem-decomp}, decompose $G$ into collections of stars that either have all small ($s_e \leq B/2$) or all large ($s_e \in (B/2, B]$) edges. Then we pick a random subcollection and run the appropriate algorithm -- either \staralg or \knapsack -- based on whether all edges are small or large, respectively. By averaging, our overall approximation ratio will be determined by how many collection of stars are needed to cover $G$. Using the Nash-Williams tree packing theorem, we can show that $O(dens(G))$ collections suffice \cite{nash1964decomposition}.

\begin{lem}\label{lem-decomp-dens}
	Given any graph $G$, there exists a poly-time algorithm that outputs $O(dens(G))$-many collections of stars each on vertex set $V(G)$ such that these collections cover all edges $E(G)$.
\end{lem}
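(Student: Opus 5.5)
We will decompose $G$ first into forests and then split each forest into two collections of stars. The starting point is the Nash-Williams theorem \cite{nash1964decomposition}. It says the edge set of a multigraph can be partitioned into $k$ forests if and only if every vertex subset $H$ with $|H|\ge 2$ satisfies $|E[H]| \le k(|H|-1)$. Since $|E[H]| \le dens(G)\cdot |H|$ and $|H|-1 \ge |H|/2$ for $|H| \ge 2$, the choice $k = \lceil 2\, dens(G)\rceil$ satisfies the condition. Hence $E(G)$ partitions into $O(dens(G))$ forests $F_1,\dots,F_k$, each viewed on vertex set $V(G)$. We do not need the exact arboricity, so it is enough to have any polynomial-time forest decomposition with $O(dens(G))$ forests. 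Here there are two options. One is to cite the known polynomial-time matroid-union or partitioning algorithms for computing an optimal arboricity decomposition. The simpler option is a degeneracy ordering: repeatedly delete a minimum-degree vertex. Every subgraph $H$ has average degree at most $2\,dens(G)$, so each vertex, at the time it is removed, has at most $2\,dens(G)$ remaining neighbors. Orient each edge from the vertex removed earlier to the one removed later, so every vertex has out-degree at most $d = \lfloor 2\,dens(G)\rfloor$. Number each vertex's out-edges $1,\dots,d$. Each class $j$ is then a pseudo-forest in which every vertex has out-degree at most $1$, and the orientation is acyclic because it follows the removal order. So each class is a forest in which every vertex points to at most one "parent", namely a rooted forest. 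This handles parallel edges too, since those are just multiple out-edges.

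Next, each rooted forest splits into two collections of stars. Root each tree and let the depth parity of each vertex split it into even and odd levels. Let $S_0$ be the set of edges whose parent endpoint is at even depth, and $S_1$ the set whose parent endpoint is at odd depth. Every edge joins a parent to a child, so these two sets cover the forest. In $S_0$, each connected component is a star: an even-depth parent together with its children. The children sit at odd depth, so their own child edges lie in $S_1$, and each child has only one parent. The same holds for $S_1$. A star with center $c$ may have a single leaf, and in that case the choice of center is arbitrary. Isolated vertices are harmless, since every collection is taken on vertex set $V(G)$. This yields $2k = O(dens(G))$ collections of stars covering $E(G)$, and all the steps (degeneracy ordering, assigning classes, BFS parity) run in polynomial time.

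The main point needing care is the orientation step: we must make sure that each out-edge class really is acyclic, including for parallel edges, and that the depth-parity split produces stars whose centers are well defined. A star here means one center joined to distinct leaves by single edges. Two parallel edges between the same parent and child would go to different classes, because they are distinct out-edges of the child, so within one class no parallel edges arise. If the degeneracy ordering fails somewhere in this analysis, the fallback is to invoke Nash-Williams together with a polynomial-time forest-partition algorithm directly and then apply the same parity split.
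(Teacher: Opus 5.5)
Your argument is correct. It shares the paper's two-stage skeleton: partition $E(G)$ into $O(dens(G))$ forests, then cut each rooted forest into star collections by depth. Both stages, however, are carried out differently.

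For the forests, the paper simply cites Nash-Williams together with a matroid-partitioning algorithm. Your degeneracy route works as follows: every induced subgraph has minimum degree at most $2\,dens(G)$, you orient edges along the removal order, and you split out-edges into $\lfloor 2\,dens(G)\rfloor$ classes. Each class is acyclic with out-degree at most $1$, hence a forest. The one step you assert without proof is easy: a cycle on $k$ vertices in a class would force each of its vertices to be the tail of exactly one cycle edge, which creates a directed cycle. This route is self-contained and elementary, and it gives you parent pointers for free. Its cost is possibly about twice as many forests as an optimal arboricity decomposition, which is irrelevant for an $O(\cdot)$ bound, so your fallback hedge is not needed.

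For the stars, the paper deletes the vertices at depth $\equiv a \pmod 3$ for each $a\in\{0,1,2\}$, producing three star collections per forest. You instead partition the edges by the parity of the parent's depth and get two. This saves a constant factor in the number of collections, and hence in the final approximation ratio of \textsc{AdaptGeneral}, and nothing more.

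One wording fix: in the degeneracy step, bound the number of edges (counted with multiplicity) from the removed vertex to the remaining ones, not the number of distinct neighbors. That edge count is what your out-degree bound actually uses.
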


Before proving the lemma, we use it to give our complete algorithm.

\medskip
\hrule
\smallskip
\hrule
\medskip
\genalg: Given input graph $G$ and budget $B \geq 0$:
\begin{itemize}
	\item Run the algorithm guaranteed by \Cref{lem-decomp-dens} on $G$ to obtain $O(dens(G))$-many collections of stars, say $G_1, \dots, G_\ell$ with $\ell = O(dens(G))$ such that $E(G_1) \cup \cdots \cup E(G_\ell) = E(G)$.
    \item For each collection of stars, say $G_k$, let $G_k^S$ and $G_k^L$ be the subgraphs of $G_k$ consisting of only the small ($s_e \leq B/2$) and large ($s_e \in (B/2, B]$) edges, respectively.
	\item Pick a collection of stars from $\{G_1^S, \dots, G_\ell^S\} \cup \{G_1^L, \dots, G_\ell^L\}$ uniformly at random. Let $G'$ be the chosen subgraph.
    \item If $G'$ has small edges, then run $\staralg(G', B)$. Otherwise, $G'$ has large edges, so run $\knapsack(G',B)$.
\end{itemize}
\hrule
\smallskip
\hrule
\medskip

The proof of \Cref{thm-and-adap} follows quickly from the guarantees of \staralg and \knapsack and the decomposition lemma.

\begin{proof}[Proof of \Cref{thm-and-adap}]
	We say that $G'$ is \emph{small} if $G' = G_k^S$ for some $k$ and \emph{large} if $G' = G_k^L$ for some $k$. Then conditioned on the choice of $G'$,the expected reward of our algorithm is 
    \[\EE[\genalg(G, B) \mid G'] = \ind_{G' \text{ small}} \cdot \EE[\staralg(G', B) \mid G'] + \ind_{G' \text{ large}} \cdot \EE[\knapsack(G', B) \mid G'].\]
    Applying the guarantees of \staralg (\Cref{thm:starsmall}) and \knapsack (\Cref{lem:starbig}) for any budget $B' \geq B$, we obtain
    \begin{align*}
        \EE[\genalg(G, B) \mid G'] &= \Omega((\frac{B}{B'})^2) \cdot(\ind_{G' \text{ small}} \cdot  \EE[\opt(G', B) \mid G'] + \ind_{G' \text{ large}} \cdot \EE[\opt(G', B) \mid G'])\\
        &=  \Omega((\frac{B}{B'})^2) \cdot \EE[\opt(G', B) \mid G'].
    \end{align*}
    
    To complete the proof, we apply the decomposition lemma (\Cref{lem-decomp}) and average over the $O(dens(G))$-many possible choices of $G'$. Taking $B' = 2S \cdot B$ gives
    \begin{align*}
        \EE[\genalg(G,B)] &= \EE[ \EE[\genalg(G', B) \mid G']]\\
        &= \Omega(\frac{1}{S^2}) \cdot \EE[\EE[\opt(G', 2S \cdot B) \mid G']]\\
        &= \Omega(\frac{1}{S^2 \cdot dens(G)}) \cdot (\sum_{k = 1}^\ell \EE[\opt(G_k^S, 2S \cdot B) + \EE[\opt(G_k^L, 2S \cdot B)])\\
        &= \Omega(\frac{1}{S^2 \cdot dens(G)}) \cdot \EE[\opt(G,B)]
    \end{align*}
	as required.
\end{proof}

Now we go back and prove \Cref{lem-decomp-dens}. Our algorithm first decomposes the input graph into forests, and then further decomposes each forest into collections of stars. Such a decomposition can be found, e.g. by a matroid partitioning algorithm \cite{edmonds1965minimum}.

\begin{prop}\label{prop: decomp}    
    Given graph $G$, there is a poly-time algorithm that partitions $E(G)$ into $O(dens(G))$-many forests.
\end{prop}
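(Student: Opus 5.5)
The plan is to apply the Nash-Williams forest decomposition theorem together with a polynomial-time matroid partitioning algorithm. Recall that the arboricity of $G$, the minimum number of forests needed to cover $E(G)$, equals
\[
\max_{H \subset V(G),\, |H| \geq 2} \left\lceil \frac{|E[H]|}{|H| - 1} \right\rceil
\]
by Nash-Williams \cite{nash1964decomposition}. Parallel edges are allowed, but they form cycles of length two and are handled by the same graphic-matroid formulation. It therefore suffices to do two things. First, bound this quantity by $O(dens(G))$. Second, find an optimal (or constant-factor) forest partition in polynomial time.

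For the bound, fix any $H$ with $|H| \geq 2$. Since $|H| - 1 \geq |H|/2$, we get $\frac{|E[H]|}{|H|-1} \leq 2\frac{|E[H]|}{|H|} \leq 2\, dens(G)$. Taking ceilings, the arboricity is at most $\lceil 2\, dens(G) \rceil$. If $E(G) = \emptyset$ there is nothing to prove. Otherwise some edge $uv$ exists, and the pair $\{u,v\}$ gives $dens(G) \geq 1/2$. Hence $\lceil 2\, dens(G)\rceil \leq 2\,dens(G) + 1 \leq 4\, dens(G) = O(dens(G))$.

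For the algorithm, view $E(G)$ as the ground set of the graphic matroid $M(G)$; its independent sets are exactly the forests, and a parallel pair counts as a circuit. A partition of $E(G)$ into $k$ forests is a partition into $k$ independent sets of $M(G)$. Edmonds' matroid partitioning algorithm \cite{edmonds1965minimum} decides whether such a partition exists for a given $k$ and, if so, outputs one. It runs in time polynomial in $|E|$ and $k$, given an independence oracle, which for forests is a simple cycle check. We then try $k = 1, 2, \dots$ in turn; the arboricity is at most $|E|$, so this adds only a polynomial factor. The first feasible $k$ is the arboricity, which we showed is $O(dens(G))$. The forests found are the desired output.

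No step here is a serious obstacle, since both ingredients are classical. The only care needed is the constant-factor conversion from $|H|-1$ to $|H|$, together with the degenerate cases ($|H| = 1$, an empty edge set, parallel edges), and all of these were handled above.
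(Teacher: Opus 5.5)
Your proposal is correct and takes the same approach as the paper, which gives no separate proof and only points to the Nash-Williams arboricity theorem for the $O(dens(G))$ bound and to Edmonds' matroid partitioning algorithm for computing the forests. Your write-up supplies the details the paper leaves implicit, namely $\frac{|E[H]|}{|H|-1} \le 2\,\frac{|E[H]|}{|H|}$ for $|H| \ge 2$ and $\lceil 2\,dens(G) \rceil \le 4\,dens(G)$ whenever $E(G) \neq \emptyset$.
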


Using this subroutine, we state the algorithm guaranteed by \Cref{lem-decomp-dens}, \decompalg and prove that is has the desired properties.

\medskip
\hrule
\smallskip
\hrule
\medskip
\decompalg: Given input graph $G$:
\begin{itemize}
	\item Run the algorithm guaranted by \Cref{prop: decomp} on $G$ to partition the $E(G)$ into $\ell = O(dens(G))$-many forests, say $F_1, \dots, F_\ell$, each on vertex set $V = V(G)$.
	\item For each such forest, say $F_i$:
	 
	 \begin{itemize}
	 	\item Root each tree of $F_i$ arbitrarily.
	 	\item For every vertex $v \in V$, we define its depth, $d_i(v)$, to be its distance to the root of its tree in forest $F_i$.
	 	\item Then, define the three sets of vertices, $V_i(a) = \{v \in V \mid d_i(v) = a \mod 3\}$ for $a = 0, 1, 2$.
	 	\item Define the three vertex-induced subgraphs $F_i(a) = F_i - V_i(a)$ for $a = 0,1, 2$.  See \Cref{fig:tree_ex} for example.
 	\end{itemize}
 	\item Output the collection of subgraphs $\{F_i(a) \mid i = 1, \dots, \ell ;\, a = 0,1,2\}$.
\end{itemize}
\hrule
\smallskip
\hrule
\medskip

    \begin{figure}[h]
        \centering
        \includegraphics[scale = 0.3]{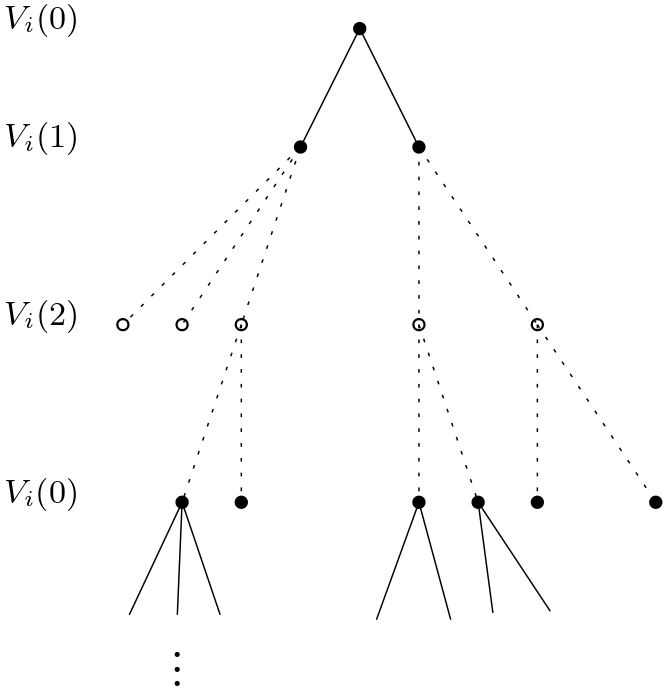}
        \caption{An example of a single tree in subgraph $F_i(2)$. Solid edges and vertices appear in $F_i(2)$.}
        \label{fig:tree_ex}
    \end{figure}

\begin{proof}[Proof of \Cref{lem-decomp-dens}]
	It is clear that \decompalg is efficient, and we output $O(dens(G))$-many subgraphs, so it remains to show that the collection $\{F_i(a) \mid i = 1, \dots, \ell ;\, a = 0,1,2\}$ covers all edges of $G$. Consider any edge $e$ of $G$. Note that the collection $\{F_i \mid i = 1, \dots, \ell\}$ covers $G$, so $e$ belongs to some forest, say $F_i$. Since $F_i$ is a forest, the edge $e$ must connect vertices in two consecutive depth levels. Modulo $3$, this leaves one choice of $a= 0,1,2$ such that $e$ connects two vertex whose depth is \emph{not} equal to $a \mod 3$. We conclude, $e$ is an edge in $F_i(a)$ for this choice of $a$.
\end{proof}

    \section{Conclusion and Future Work}

We introduced \bayes, one of the first stochastic combinatorial optimization models that incorporates correlated rewards, while still admitting provably efficient approximation algorithms. Our non-adaptive and adaptive algorithms give the first poly-time guarantees for Bayesian active search (and stochastic probing and knapsack with correlations across items). These results show that realistic dependencies among items need not preclude strong worst-case performance. Further, our correlation model is rich enough to capture typical bad examples for Bayesian active search as well as practically-relevant prior distributions, such as those arising in site percolation.

Our work gives the foundation for how techniques from stochastic combinatorial optimization can have a large impact in Bayesian machine learning problems, and it opens up many directions for future work:
\begin{itemize}
    \item \textbf{Hypergraphs} Extending our model from graphs to hypergraphs would allow us to capture higher-order correlations common in biological and social networks.
    
    \item \textbf{Different reward functions} In \bayes, one can view the reward of an edge as the (weighted) \textsc{AND} of its endpoints, but it would be interesting and practically relevant to consider a generalization where the reward of a (hyper-)edge is some Boolean formula applied to its endpoints. For example, threshold functions (at least $t$ endpoints are active) would be an interesting generalization.

    \item \textbf{Different probes} Further, in our model, when we probe an edge we learn the outcomes of both of its endpoints. One could consider a more restricted setting where an edge probe only reveals the realized edge reward (in \bayes, this means one would only learn the \textsc{AND} of the endpoints, rather than both endpoint values). This would introduce much more complicated Bayesian updates.
\end{itemize}

We hope the techniques developed here -- particularly the decomposition lemma and our structural properties of \ee policies -- serve as building blocks for tackling richer correlated stochastic optimization problems.

\section*{Acknowledgments}

We thank Viswanath Nagarajan for helpful discussions leading to the connection with adaptive submodularity, and an anonymous reviewer for pointing out the alternative adaptivity gap lower bound in \Cref{lem_matching_lower}.  AG's research was supported in part by NSF awards CCF-2422926 and CCF-2608359. Benjamin Moseley was supported in part by ONR grant N000142212702.




    \bibliographystyle{plain}
    \bibliography{ref}

    \appendix

    \section{Basic properties of knapsack LP}\label{sec: appendix_knapsack}

Here we prove some basic facts about the knapsack LP (\ref{lp_knapsack}) and (\ref{lp-cond}), which we need throughout our analysis.

\lemknapsackrescale*
\begin{proof}
    Consider an optimal solution, say $x^*$, to $\lp(G,B')$. We claim that $\frac{B}{B'} \cdot x^*$ is feasible for $\lp(G,B)$. For the knapsack constraint, we have
    \[\sum_{e \in E(G)} s_e \cdot \frac{B}{B'} x^*_e =  \frac{B}{B'} \sum_{e \in E(G)} s_e \cdot x^*_e \leq B.\]
    Using our assumption that all edges have size at most $B$, the pruning constraint ($x_e = 0$ if $s_e > B$) of $\lp(G,B)$ is trivial, and it is clear that $0 \leq \frac{B}{B'} \cdot x^*_e \leq 1$ for all edges $e$. This solution to $\lp(G,B)$ achieves objective value $\frac{B}{B'} \cdot \lp(G,B')$, so we have $\lp(G,B) \geq \frac{B}{B'} \cdot \lp(G,B')$, as required.
\end{proof}

\propsubmod*
\begin{proof}
    Consider any sets $C_1\subset C_2 \subset C$ and adding any center $c \notin C_2$ to sets $C_1$ and $C_2$. We will show that $g(C_1 \cup \{c\}) - f(C_1) \geq f(C_2 \cup \{c\}) - f(C_2)$.
    
    Here, we use the well-known fact that an optimal solution to the knapsack LP is obtained by the \emph{fractional greedy algorithm}, $\fracgreedy(P)$ \cite{dantzig1957discrete}. Stated in terms of $\lp(G \mid P,B)$, we consider edges in $E(\bigcup_{c \in P} G_c)$ in decreasing order of value density $\frac{\EE[R_e]}{s_e}$. Initially all $x_e$'s are set to $0$. For each considered edge $e$, we increase $x_e$ as much as possible as long as the budget constraint is satisfied subject to the LP constraint $x_e \leq 1$. In the end, this leads to a solution with at most one fractional $x_e$.
    
    We first interpret the quantity $f(C_2 \cup \{c\}) - f(C_2) = \fracgreedy(C_2 \cup \{c\}) - \fracgreedy(C_2)$ in terms of the fractional greedy algorithm. Consider $\fracgreedy(C_2 \cup \{c\})$. In this algorithm, suppose we allocate $b$ units of budget to edges in $G_c$ (so $b = \sum_{e \in E(G_c)} s_e \cdot x_e$). By definition of \fracgreedy, these $b$ units are allocated to the most value-dense edges in $G_c$ (those with largest $\frac{\EE[R_e]}{s_e}$). Let $u$ be the reward obtained by choosing these edges. These edges replace the $b$ units of the \emph{least} value-dense edges allocated by $\fracgreedy(C_2)$, which give reward, say $\ell_2$. Thus, we have $f(C_2 \cup \{c\}) - f(C_2) = u - \ell_2$.

    Next, we lower bound $f(C_1 \cup \{c\}) - f(C_1)$ similarly. Consider the fractional solution generated by $\fracgreedy(C_1)$. We lower bound $f(C_1 \cup \{c\})$ as follows: Replace the $b$ units of the least value-dense edges allocated by $\fracgreedy(C_1)$, which give reward, say $\ell_1$ with the $b$ units of the the most value dense edges in $G_c$, which again gives reward $u$. This gives a feasible fractional solution to $\lp(G \mid C_1 \cup \{c\}, B)$ with objective value $f(C_1) + u - \ell_1$. Since $C_1 \subset C_2$, we have $\ell_1 \leq \ell_2$, so we can conclude $f(C_1 \cup \{c\}) - f(C_1) \geq u - \ell_1 \geq u - \ell_2 = f(C_2 \cup \{c\}) - f(C_2)$, as required.
\end{proof}

\knapsackgap*
    \begin{proof}
        The assumed instance for \adaptiveknapsack is exactly an instance of the following stochastic probing problem, \probe by taking the function $g(P) = \lp(G \mid P, B)$ and the set family of all subsets of $C$ with total size at most $B'$.
        \begin{definition}[\probe]\label{def-probe}
        	In an instance of \probe, we are given a set function $f:2^Y \rightarrow \RR_+$ on ground set $Y$ and a downward-closed set family $\mathcal{Y} \subset 2^Y$. Every item $y \in y$ is independently \emph{active} with known probability $p_y$. However, whether an item is active or not is unknown until that item is \emph{probed}. The goal is to adaptively probe a set of elements $P \in \mathcal{Y}$ and then choose a subset of the active probed elements, say $Q \subset P \cap A$, where $A \subset Y$ is the subset of active elements, to maximize $\EE[f(Q)]$. 
        \end{definition}
        \probe has the following known adaptivity gap result due to \cite{DBLP:conf/soda/GuptaNS16}.
        \begin{prop}\label{prop-probe-gap}
        	If the set function $f$ is non-negative, monotone, and submodular, then the adaptivity gap of \probe is at most $3$.
        \end{prop}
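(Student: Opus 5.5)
The statement to prove is \Cref{prop-probe-gap}, the adaptivity gap of $3$ for \probe with a non-negative, monotone, submodular $f$ under a downward-closed constraint $\mathcal{Y}$. It is a known result of \cite{DBLP:conf/soda/GuptaNS16}, and I would follow their random-walk / decision-tree argument.

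\textbf{Step 1 (decision tree).} Represent the optimal adaptive policy as a binary decision tree $\mathcal{T}$. Each internal node probes an element $y$ and branches on active or inactive. By downward-closedness, every root-to-leaf path probes a set in $\mathcal{Y}$. We may assume the policy finally selects all active probed elements, since $f$ is monotone. The adaptive value is then $\EE[f(A \cap \text{path})]$, where the path is the random root-to-leaf path under the true activation probabilities.

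\textbf{Step 2 (non-adaptive policy).} Sample a random path $\sigma$ of $\mathcal{T}$ by walking down the tree and taking the ``active'' branch at node $y$ independently with probability $p_y$. Then non-adaptively probe every element on $\sigma$; this set lies in $\mathcal{Y}$. Its value is $\EE_{\sigma}\EE_{A}[f(\sigma\cap A)]$, where $A$ is a fresh independent realization. The adaptive value instead uses the correlated quantity $\EE[f(\sigma \cap \{\text{elements on which $\sigma$ turned active}\})]$. So it suffices to show that, for a random path, the fresh realization recovers a $1/3$ fraction of the ``path-active'' value.

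\textbf{Step 3 (main obstacle).} The core is a comparison along the path using submodularity. Write the adaptive value as a sum of marginal gains at each node $v$ where the active branch is taken: $p_y \cdot f_{S_v}(y)$, where $S_v$ is the set of path-active elements above $v$. On the non-adaptive side, define marginals relative to the random set $\sigma\cap A$. I would follow Gupta--Nagarajan--Singla: bound $f(\sigma\cap A)$ below via the multilinear extension or a concave-closure comparison. Condition on the path and compare, step by step, the gain of adding $y$ to $S_v$ with adding $y$ to $(\sigma\cap A)$ restricted to ancestors, using an inductive potential argument. The part I expect to be hardest is controlling the discrepancy created because $A$ disagrees with the branch directions on $\sigma$. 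The tree-based inequality in their paper resolves this, losing a factor $1-1/e$ twice, or $1/3$ via their ``stem'' argument.

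In the paper's context I would simply cite \cite{DBLP:conf/soda/GuptaNS16} for the full argument and note that its hypotheses hold. Monotonicity, nonnegativity and submodularity of $g(P)=\lp(G\mid P,B)$ follow from \Cref{prop:submod} together with the obvious monotonicity of the LP in $P$. The family of subsets of total size at most $B'$ is downward closed. This finishes \Cref{prop:knapsack_gap}.
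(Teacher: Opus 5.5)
Your proposal matches the paper: the paper also gives no argument of its own for \Cref{prop-probe-gap} and simply invokes the Gupta--Nagarajan--Singla adaptivity-gap theorem. Your check that $g(P)=\lp(G\mid P,B)$ is non-negative, monotone and submodular, and that the budget family is downward closed, is exactly how the paper uses the result in \Cref{prop:knapsack_gap}. Your Step~3 is only a pointer, not an argument, so the citation carries all the weight; note also that the constant-$3$ bound for monotone submodular objectives is, I believe, the main theorem of their follow-up paper \cite{DBLP:conf/soda/GuptaNS17}.
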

        It is clear that our function $g$ is monotone (since adding more items to the knapsack instance can only improve the LP objective value), and it is submodular by \Cref{prop:submod}. The only difference between the assumed instance of \adaptiveknapsack and \probe is that in the latter, one can choose a subset $Q \subset P \cap A$ to receive the reward from. However, since $g$ is monotone, we can assume that one always chooses $Q = P \cap A$. Thus, \Cref{prop-probe-gap} gives the desired adaptivity gap.
    \end{proof}

    \section{\bayes is not adaptive submodular}\label{appendix: adapt}

In this section, we show that the prior distribution and objective function of \bayes do not satisfy adaptive submodularity. As in \Cref{sec-adapt}, we do not define adaptive submodularity in full generality. Rather, we consider a specialized definition for our purposes.

Given instance of \bayes on graph $G = (V,E)$, the reward of probing edges $P \subset E$ given that the active vertices are $A \subset V$ is given by the function $f(P, A) = \sum_{uv \in P} w_{uv} \cdot \ind_{u \in A} \ind_{v \in A}$. We consider the prior distribution where each vertex $v \in V$ is in $A$ independently with probability $p_v$.

In this setting, a partial realizations of edge set $E' \subset E$ is subset of active vertices $A_{V(E')} \subset V$, where $V(E') \subset V$ is all vertices incident to some edge in $E'$. We say two partial realizations $A_{V(E_1)}$ and $A_{V(E_2)}$ of $E_1 \subset E_2 \subset E$ are consistent if $A_{V(E_2)} \cap V(E_1) = A_{V(E_1)}$.

Then we define, the marginal gain of adding edge $e \in E$ to a subset $E' \subset E$ of already-probed edges with partial realization $A_{V(E')}$ by
\[\gain(e, E' \mid A_{V(E')}) = \EE_{A}[f(E' \cup \{e\}, A) \mid A \cap V(E') = A_{V(E')}] - \EE_{A}[f(E', A) \mid A \cap V(E') = A_{V(E')}].\]
Then the function $f$ is adaptive submodular if for any subsets $E_1 \subset E_2 \subset E$ and consistent partial realizations $A_{V(E_1)}$ and $A_{V(E_2)}$  of $E_1$ and $E_2$, respectively, we have
\[\gain(e, E_1 \mid A_{V(E_1)}) \geq \gain(e, E_2 \mid A_{V(E_2)}).\]
We will show that \bayes is far from satisfying this condition in general.
\begin{lem}\label{lem:bayeslower}
    For any $\epsilon \in (0,1)$, there exists an instance of \bayes along with an subsets $E_1 \subset E_2 \subset E$, $e \notin E_2$, and consistent partial realizations $A_{V(E_1)}$ and $A_{V(E_2)}$  of $E_1$ and $E_2$, respectively such that
    \[\gain(e, E_1 \mid A_{V(E_1)}) \leq \epsilon \cdot \gain(e, E_2 \mid A_{V(E_2)}).\]
\end{lem}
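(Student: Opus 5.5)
Our plan is to build a small explicit instance in which probing an earlier edge \emph{raises} the marginal gain of a later edge by revealing that a shared endpoint is active. This is the same phenomenon as the stars of \Cref{ex:gap_uniform}.

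\textbf{Instance.} Take a path on three vertices $a, c, b$ with edges $e_1 = ac$ and $e = cb$, all weights and sizes $1$. Set $p_a = p_b = 1$ and $p_c = \epsilon$.

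\textbf{Choice of sets and realizations.} Let $E_1 = \emptyset$ with the empty partial realization. Let $E_2 = \{e_1\}$ with $A_{V(E_2)} = \{a, c\}$, so $c$ is observed to be active. These are trivially consistent, since $A_{V(E_2)} \cap V(E_1) = \emptyset$. Also $e \notin E_2$.

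\textbf{Computing the gains.} Since $f$ is additive over probed edges, the marginal gain of $e$ is just the conditional expected reward of $e$. With nothing observed,
\[\gain(e, E_1 \mid A_{V(E_1)}) = w_e \cdot p_c p_b = \epsilon.\]
Conditioned on $c$ being active,
\[\gain(e, E_2 \mid A_{V(E_2)}) = w_e \cdot \prob(X_b = 1) = 1.\]
Hence $\gain(e, E_1 \mid A_{V(E_1)}) = \epsilon \cdot \gain(e, E_2 \mid A_{V(E_2)})$, which gives the claimed inequality. If one insists on a strict inequality or on $p_c$ in the open interval, take $p_c = \epsilon/2$ instead.

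\textbf{Anticipated obstacle.} There is essentially none. The only care needed is to check the consistency condition under the paper's specialized definition, and to confirm that the conditioning event in $\gain$ has positive probability. It does, since $\prob(X_a = X_c = 1) = \epsilon > 0$.
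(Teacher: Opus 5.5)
Your proposal is correct and is essentially the paper's own proof. The paper also uses a star with center $c$ of probability $\epsilon$ and two leaves of probability $1$ (your path through $a$, $c$, $b$). It takes $E_1=\emptyset$, $E_2=\{ca\}$, $e=cb$ and $A_{V(E_2)}=\{c,a\}$, and computes the two gains as $\epsilon$ and $1$, exactly as you do.
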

\begin{proof}
    Consider a star with two leaves such that the center $c$ has probability $\epsilon$ both leaves, say $a$ and $b$ have probability $1$, and both edges (one to each leaf) has weight $1$. Then take $E_1 = \emptyset$, $E_2 = ca$,  and $e = cb$. For the partial realizations, $A_{V(E_1)} = \emptyset$ is trivial, and we take $A_{V(E_2)} = \{c,a\}$ (i.e. the center $c$ is active). Then we have
    \[\gain(e, E_1 \mid A_{V(E_1)}) = \EE_{A}[\ind_{c \in A} \ind_{b \in A}] = \Pr(c \in A) = \epsilon.\]
    On the other hand, the marginal gain of probing $cb$ when we already know that the center is active is
    \[\gain(e, E_2 \mid A_{V(E_2)}) = \EE_{A}[\ind_{c \in A} \ind_{b \in A} \mid A \cap \{c,a\} = \{c,a\}] = 1,\]
    as required.
\end{proof}

    \section{Alternate Adaptivity Gap Lower Bound}\label{sec_matching_lower}

\lemmatchinglower*
\begin{proof}
    We define an instance of \bayes with $G = (V,E)$, where $E = K \cup M$, where $K$ is a complete graph on vertex set $V$ and $M$ is a perfect matching on $V$ (assume $\lvert V \rvert$ is even). We write $n = \lvert V \rvert$. Then for each edge $e \in K$, we set $s_e = n$ and $w_e = 1$. For each edge $e \in M$, we set $s_e = \frac{2}{n}$ and $w_e = 0$. For every vertex $v \in V$, we set $p_v = \frac{1}{n}$. We take the query budget to be $B = n + 1$. This completes the description of our instance.

    Note that $\lvert E \rvert = \Theta(n^2)$, so it suffices to show an adaptivity gap lower bound of $\Omega(n^2)$. First, we analyze the optimal non-adaptive policy. We observe that every edge in $K$ gives expected reward $\frac{1}{n^2}$, and every edge in $M$ gives expected reward $0$. Thus, $\opt_{NA}$ queries a single edge from $K$ to obtain expected reward $\EE[\opt_{NA}] = \frac{1}{n^2}$.

    On the other hand, consider the following adaptive policy: First query all $\frac{n}{2}$-many matching edges, which costs $\frac{n}{2} \cdot {2}{n} = 1$ unit of the budget. Afterwards, we know exactly which vertices are active/inactive. If there are at least $2$ active vertices, which happens with probability $\Omega(1)$, then we spend the remaining $n$ units of budget to query a single such edge from $K$ to obtain reward $1$. We conclude, $\EE[\opt_A] = \Omega(1)$, which gives the desired gap.
\end{proof}

\end{document}